\newtheorem{theorem}{Theorem}[section]
\newtheorem{corollary}[theorem]{Corollary}
\newtheorem{lemma}[theorem]{Lemma}
\newtheorem{proposition}[theorem]{Proposition}
\newtheorem{claim}[theorem]{Claim}
\newtheorem{example}[theorem]{Example}
\newtheorem*{conjecture*}{Conjecture}
\newtheoremstyle{nonindented}{1ex}{1ex}{}{}{\bfseries}{.}{.5em}{}
\newtheoremstyle{indented}{1ex}{1ex}{\itshape\addtolength{\leftskip}{0.6cm}\addtolength{\rightskip}{0.6cm}}{}{\bfseries}{.}{.5em}{}
\theoremstyle{nonindented}
\theoremstyle{indented}
\theoremstyle{plain}
\newcommand{\set}[1]{\left\{ #1 \right\}}
\newcommand{\union}{\cup}
\newcommand{\floor}[1]{\lfloor {#1} \rfloor}
\renewcommand{\tilde}{\widetilde}
\renewcommand{\bar}{\overline}
\DeclareMathOperator{\poly}{poly}
\def\min{\qopname\relax n{min}}
\def\max{\qopname\relax n{max}}
\def\Pr{\qopname\relax n{\mathbf{Pr}}}
\def\Ex{\qopname\relax n{\mathbf{E}}}
\newcommand{\RR}{\mathbb{R}}
\newcommand{\RRp}{\RR_+}
\newcommand{\ZZ}{\mathbb{Z}}
\def\A{\mathcal{A}}
\def\D{\mathcal{D}}
\def\F{\mathcal{F}}
\def\I{\mathcal{I}}
\def\P{\mathcal{P}}
\def\O{\mathcal{O}}
\def\eps{\epsilon}
\def\sse{\subseteq}
\newcommand{\INPUT}{\item[\textbf{Input:}]}
\newcommand{\OUTPUT}{\item[\textbf{Output:}]}
\newcommand{\PARAMETER}{\item[\textbf{Parameter:}]}
\newcommand{\mini}[1]{\mbox{minimize} & {#1} &\\}
\newcommand{\maxi}[1]{\mbox{maximize} & {#1 } & \\}
\newcommand{\st}{\mbox{subject to} }
\newcommand{\con}[1]{&#1 & \\}
\newcommand{\qcon}[2]{&#1, & \mbox{for } #2.  \\}
\newenvironment{lp}{\begin{equation}  \begin{array}{lll}}{\end{array}\end{equation}}
\newenvironment{lp*}{\begin{equation*}  \begin{array}{lll}}{\end{array}\end{equation*}}
\title{Algorithmic Persuasion with  No Externalities}
\author{
	Shaddin Dughmi\thanks{Supported in part by NSF CAREER Award CCF-1350900.} \\
	Department of Computer Science\\
	University of Southern California\\
	{\tt shaddin@usc.edu}
	\and
	Haifeng Xu\thanks{Supported by NSF grant CCF-1350900.} \\
	Department of Computer Science\\
	University of Southern California\\
	{\tt haifengx@usc.edu}
}
\date{}
\begin{document}

\maketitle
\begin{abstract} 
	We study the algorithmics of information structure design --- a.k.a. persuasion or signaling --- in a fundamental special case introduced by Arieli and Babichenko: multiple agents, binary actions,  and no inter-agent externalities. Unlike prior work on this model, we allow many states of nature.  We assume that the principal's objective is  a monotone set function, and study the problem both in the public signal and private signal models, drawing a sharp contrast between the two  in terms of both efficacy and computational complexity.
	
	When private signals are allowed, our results are largely positive and quite general. First, we use linear programming duality and the equivalence of separation and optimization to show polynomial-time equivalence between (exactly) optimal signaling and the problem of maximizing the objective function plus an additive function. This yields an efficient implementation of the optimal scheme when the objective is supermodular or anonymous. Second, we exhibit a $(1-1/e)$-approximation of the optimal private signaling scheme, modulo an additive loss of $\eps$, when the objective function is submodular. These two results simplify, unify, and generalize results of \cite{Arieli2016,babichenko2016}, extending them from a binary state of nature to many states (modulo the additive loss in the latter result). Third, we consider the binary-state case with a submodular objective, and simplify and slightly strengthen the result of  \cite{babichenko2016}  to obtain a $(1-1/e)$-approximation via a scheme which (i)  signals independently to each receiver and (ii) is ``oblivious'' in that it does \emph{not} depend on the objective function so long as it is monotone submodular.
	
	When only a public signal is allowed, our results are negative. First, we show that it is NP-hard to approximate the optimal public scheme, within any constant factor, even when the objective is additive. Second, we show that the optimal private scheme can outperform the optimal public scheme, in terms of maximizing the sender's objective, by a polynomial factor.
	
\end{abstract}

\section{Introduction}


Information structure design studies how beliefs influence behavior, both of individuals and of groups, and how to shape those beliefs in order to achieve desired outcomes. The key object of study here is the \emph{information structure}, which describes ``who knows what'' about the parameters governing the payoff function of a game of incomplete information. These parameters, collectively termed the \emph{state of nature}, encode uncertainty in the environment, and their prior distribution together with the information structure determine agents' equilibrium behavior. 

When studied descriptively, information structures lend  structural insight into the space of potential equilibria of a game of incomplete information (termed Bayes correlated equilibria in \cite{Bergemann16Bayes}), and can provide answers to comparative statics questions regarding the provision or withholding of information. The associated prescriptive questions are fundamentally algorithmic, and are often termed \emph{persuasion} or \emph{signaling}. Persuasion is the task faced by a principal --- call her the \emph{sender} --- who has privileged access to the state of nature, and can selectively provide information to the agent(s) in the game --- call him/them the \emph{receiver(s)} --- in order to further her own objectives. Such a sender effectively \emph{designs} the information structure, and faces the algorithmic problem of selecting which of her information to reveal, to whom, and whether and how to add noise to this information.  As in much of the prior work in this area, when we think of the information structure as an algorithm implemented by the sender, we call it a \emph{signaling scheme}.

Both descriptive and prescriptive questions in this space have enjoyed an explosion of research interest of late, in particular at the intersection of economics and computer science.  Our discussion can not do justice to the large literature in this area, including the wealth of work on related models. We therefore refer the curious reader to the recent survey by \cite{Dughmi2017} and the references therein. The model we focus on in this paper, and for which we provide a thorough exploration through the algorithmic lens, is the fundamental special case of the multi-agent persuasion problem when restricted to binary actions and no externalities.  

\subsection*{Our Model and its Significance}

Our technical and conceptual starting point, and arguably the most influential model in this area, is the \emph{Bayesian persuasion} model of \cite{Kamenica2011}. Here the sender faces a single receiver, and her task is to persuade the receiver to take an action which is more favored by the sender.  As in persuasion more generally, the sender's ``leverage'' is her informational advantage --- namely, access to the realization of the state of nature, which determines the payoffs of the actions to both the receiver and the sender. The receiver, in contrast, a-priori knows nothing about the state of nature besides its prior distribution (the common prior). \citet{Kamenica2011} assume that the sender can \emph{commit} to her policy for revealing the information prior to realization of the state of nature, and study how and when this signaling scheme can increase the sender's utility. 

The model we use is one recently proposed by \citet{Arieli2016}, both generalizing and restricting  aspects of the model of \cite{Kamenica2011}.
Here the sender interacts with multiple receivers, each of whom is restricted to a binary choice of actions. Without loss, we denote these actions by $1$ and $0$. 
As mentioned in \cite{Arieli2016,babichenko2016}, settings like this arise when a manager seeks to persuade investors to invest in a project, or when a principal persuades opinion leaders in a social network with the goal of maximizing social influence.
Each receiver's utility depends only on his own action and the state of nature, but crucially not on the actions of other receivers --- the \emph{no externality} assumption. The sender's utility, on the other hand, depends on the state of nature as well as the \emph{profile} of receiver actions --- since actions are binary, the sender's utility can be viewed as a \emph{set function} on receivers. 
As in \cite{Kamenica2011}, the state of nature is drawn from a common prior, and the sender can commit to a policy of revealing information regarding the realization of the state of nature. Since there are multiple receivers, this policy --- the information structure --- is more intricate, since it can reveal different, and correlated, information to different receivers. As made clear in \cite{Arieli2016}, such flexibility is crucial to the sender unless receivers are homogeneous and the sender's utility function highly structured (for example, additively separable across receivers). In particular, if restricted to a \emph{public communication channel}, the sender is limited in her ability to discriminate between receivers and correlate their actions, whereas a \emph{private communication channel} provides more flexibility.

Our results, and the models of \citet{Kamenica2011} and \citet{Arieli2016}, are crucially underlied by the assumption that the sender has the power of commitment to a signaling scheme. The commitment assumption is not as unrealistic as it might first sound, and a number of arguments to that effect are provided in \cite{Rayo10,Kamenica2011, Dughmi2017}. We mention one of those arguments here: commitment  arises organically at equilibrium if the sender and receiver(s) interact repeatedly over a long horizon, in which case commitment can be thought of as a proxy for ``establishing credibility.'' 

If one permits commitment by the sender, restricts attention to a common prior distribution, and postulates that receiver(s) are not privately informed through channels not controlled by the sender, then the model of \cite{Kamenica2011} is \emph{the} special case of persuasion for a single receiver, and  the model of~\cite{Arieli2016} is \emph{the} special case of persuasion for multiple receivers with binary actions and no externalities. Viewed this way, both are fundamental special cases of the general information structure design problem, meaningfully restricted. It is thus only fitting that they be thoroughly explored via an algorithmic lens, en-route to a more general understanding of the algorithmics of information in multi-agent settings. A computational study of the single-receiver Bayesian persuasion model of \cite{Kamenica2011} was undertaken by  \citet{Dughmi2016}. As for multi-agent persuasion with binary actions and no externalities, a partial algorithmic understanding is provided by \cite{Arieli2016,babichenko2016} in the special case of a binary state of nature, and this paper picks up there.

\subsection*{Context: Private and Public  Persuasion}
At their most general, signaling schemes  can reveal different information to different receivers, through \emph{private} communication channels. Such schemes play a dual role: they \emph{inform} receivers, possibly asymmetrically, and they \emph{coordinate} their behavior by correlating the information provided. In some settings, however, such private communication channels are unrealistic, and the sender is constrained to a \emph{public} communication channel. Work on both descriptive and prescriptive questions regarding information structure design in multi-agent settings can be classified along these lines.  The extent to which a public communication channel limits the sender's powers of persuasion is a fundamental question which has not been thoroughly explored.

Much of the earlier work on information structure design, in particular its computational aspects, focused on public signaling models. This includes work on signaling in auctions \cite{Emek12,Miltersen12,Guo13,Dughmi14b}, voting \cite{Alonso14}, routing~\cite{Bhaskar2016}, and  abstract game models~\cite{Dughmi14a,mixture_selection,Bhaskar2016,rubinstein2015eth}. The work of \cite{mixture_selection} is relevant to this paper,  in that they identify conditions under which public persuasion problems are tractable to approximate, and prove impossibility results in some cases where those conditions are violated. Our hardness proof in Section~\ref{sec:public} is in part based on some of their ideas.


Private persuasion has been less thoroughly explored, particularly through the computational lens. The space of (private channel) information structures  is studied by \citet{Bergemann16Bayes}, who observe that these information structures and their associated equilibria form a generalization of correlated equilibria, and term the generalization the \emph{Bayes Correlated Equilibrium (BCE)}. The space of BCEs is characterized in two-agent two-action games by \cite{Taneva2015}. Moreover, recent work explores private persuasion in the context of voting \cite{Wang2015,Chan2016,Bardhi2016} and network routing \cite{Vasserman2015}. 

\citet{Arieli2016} pose the model we consider in this paper, with the goal of studying private persuasion. We believe that their model wisely simplifies the general multi-agent persuasion problem by removing the most thorny aspect limiting progress in the study of private schemes: externalities. Specifically, by assuming that receivers only influence each others' payoffs through the choices of the sender, they sidestep the equilibrium selection and computation concerns which would arise in more general settings. Assuming a binary state of nature, \citet{Arieli2016} provide explicit characterizations of the optimal private signaling scheme for three of the most natural classes of sender objective functions: viewing the sender's objective as a set function on receivers, those are supermodular,  anonymous monotone submodular, and supermajority functions. In all three cases, their characterization can be easily converted to an efficient algorithm. Moreover \citet{Arieli2016} provide necessary and sufficient conditions under which public signaling schemes match the performance of private signaling schemes.  In follow-up work, \citet{babichenko2016} also consider the binary state setting, and reduce the private persuasion problem to the problem of computing the concave closure of the sender's utility function. This connection yields a  $(1-\frac{1}{e}-\epsilon)$-approximate private signaling scheme for  monotone submodular utility functions, and an optimal private scheme for  anonymous utility functions. Moreover, they show that the former approximation result is almost tight, assuming $P \neq NP$.

\subsection*{Our Results and Techniques}

In Section~\ref{sec:equivalence},  we examine private signaling in our model in the presence of many states of nature. When the sender's set-function objective is monotone non-decreasing, and the prior distribution is given as input, we show the polynomial-time equivalence of optimal (private) persuasion and the algorithmic problem of maximizing the sender's objective function plus an additive function (subject to no constraints).\footnote{We mention that the reduction from our persuasion problem to the set function maximization problem does not require monotonicity, but the opposite reduction does.} The proof relies on a linear programming formulation of private persuasion, LP duality, and the equivalence of separation and optimization. This leads to polynomial-time persuasion algorithms when the sender's objective is anonymous or supermodular. 
Our results here are similar to the line of research on optimal mechanism design \cite{Cai2012a,Cai2012,Cai2013}, in that both relate economic design problems with Bayesian incentive constraints to purely algorithmic problems. However, our techniques are different from those in \cite{Cai2012a,Cai2012,Cai2013}.


Next, we consider a sender with a monotone submodular objective, and efficiently compute a private signaling scheme which is $(1-\frac{1}{e})$-approximately optimal for the sender, modulo an arbitrarily small additive $\epsilon$. This generalizes the result of  \cite{babichenko2016} from two states of nature to many states, modulo the additive loss.  The techniques used in \cite{babichenko2016} do not appear to help in the presence of many states of nature, since they are tied to several structural properties which only hold in the case of a binary state. Therefore, our algorithm uses a different approach, and  crucially relies  on a new structural property of (approximately) optimal private signaling schemes.  Specifically, we prove that there always exists an $\epsilon$-optimal ``simple'' private signaling scheme which is a uniform mixture of polynomially many deterministic schemes, i.e., a scheme that deterministically sends a signal to each receiver upon receiving a state of nature. Notably, this property only depends on the monotonicity of the sender's utility function, and does not rely on submodularity. Using this property, we then use ideas from the literature on submodular function optimization to compute a  $(1-\frac{1}{e})$-approximation to the best ``simple'' scheme.

We note that our algorithmic results from Sections~\ref{sec:equivalence} and \ref{sec:submodular} can be approximately extended to the sample oracle model, in which our algorithm is only given sample access to the prior distribution, using  ideas from \cite{Dughmi2016}. The resulting schemes suffer  an arbitrarily small additive loss in the sender's objective and in persuasiveness of their recommendations, and this loss is unavoidable for information theoretic reasons.

In Section \ref{sec:binary}, we examine private signaling in the special case with two states of nature and a monotone submodular objective. We give a simple and explicit construction of a polynomial-time private signaling scheme that serves as a $(1-\frac{1}{e})$-approximation to the optimal scheme, which is the best possible assuming $P \neq NP$ as shown by \cite{babichenko2016}.  This result simplifies, and slightly strengthens, a result of  \cite{babichenko2016}.  Moreover,  the constructed private scheme  has the following distinctive properties: (i) it is \emph{independent} in that it signals independently to each receiver in each of the two states of nature; (ii) it is \emph{oblivious} in the sense that it does not depend on the sender's utility function, and the approximation ratio is guaranteed so long as the function is monotone submodular. To obtain this result, we first use the idea of the \emph{correlation gap} \cite{Agrawal2010} to argue that there always exists an independent  signaling scheme which is a $(1-\frac{1}{e})$ approximation to the optimal private scheme. We then exploit the fact that there are two states of nature to argue that our scheme is the optimal independent scheme, simultaneously for all monotone objectives. We then show two respects in which this result cannot generalize to many states of nature. First, we show that it is NP-hard to obtain a $(1-\frac{1}{e})$-approximation to the best independent signaling scheme in multi-state settings. Second, we show that oblivious schemes cannot guarantee more than a $\frac{1}{\sqrt{m-1}}$ fraction of the optimal sender utility where $m>1$ is the number of states of nature.

Finally, in Section~\ref{sec:public}, we consider public signaling in our model, and present two negative results. First, we show via a simple example that the optimal private scheme can outperform the optimal public scheme, in terms of maximizing the sender's objective, by a polynomial factor. Second, we employ a reduction from an NP-hard graph coloring promise problem to rule out an algorithm for approximating the optimal public scheme to within any constant factor, and also to rule out an additive PTAS for the problem.


\section{Preliminaries}
\subsection{Basic Setup}
As in \cite{Arieli2016}, we consider the special case of multi-agent persuasion with binary actions, no inter-agent externalities, and a monotone objective function.  Here, we adopt the perspective of  \emph{sender} facing $n$ \emph{receivers}.  Each receiver has two actions, which we denote by $0$ and $1$. The receiver's payoff depends only on his own action and a random \emph{state of nature} $\theta$ supported on $\Theta$. In particular, we use $u_i(\theta,1)$ and $u_{i}(\theta,0)$ to denote receiver $i$'s utility for action $1$ and action $0$, respectively, at the state of nature $\theta$;  as shorthand, we use $u_i(\theta) = u_i(\theta,1) - u_{i}(\theta,0)$ to denote how much receiver $i$ prefers action $1$ over action $0$ given state of nature $\theta$. Note that $u_i(\theta)$ may be negative.  The sender's utility (our objective) is a function of all the receivers' actions and the state of nature $\theta$.  We use $f_{\theta}(S)$ to denote the sender's utility  when the state of nature is $\theta$ and $S$ is the set of receivers who choose action $1$. We assume throughout the paper that  $f_{\theta}$ is a monotone non-decreasing set function for every $\theta$. For convenience in stating our approximation guarantees, we assume without loss of generality that $f_{\theta}$ is normalized so that $f_\theta(\emptyset)= 0$ and $f_\theta(S) \in [0,1]$ for all $\theta \in \Theta$ and $S \sse [n]$.

As is typical in information structure design, we assume that $\theta$ is drawn from a common prior distribution $\lambda$, 
that the sender has access to the realized state of nature, and that the sender can publicly \emph{commit} to a policy--- termed a \emph{signaling scheme} --- for  mapping the realized state of nature to a \emph{signal} for each receiver. The signaling scheme may be randomized, and hence reveals noisy partial information regarding the state of nature. The order of events is as follows: (1) The sender commits to a signaling scheme $\varphi$; (2) Nature draws $\theta \sim \lambda$; (3) Signals $ (\sigma_1, \ldots, \sigma_n) \sim \varphi(\theta)$ are drawn, and each receiver $i$ receives the signal $\sigma_i$; (4) Receivers select their actions. 

A general signaling scheme permits sending different signals to different receivers through a private communication channel --- we term these \emph{private signaling schemes} to emphasize this generality. We also study the special case of \emph{public signaling schemes} --- these are restricted to a public communication channel, and hence send the same signal to all receivers. We discuss these two signaling models in Sections \ref{prelim:private} and \ref{prelim:public}, including the equilibrium concept and the induced sender optimization problem for each. In both cases, we are primarily interested in the optimization problem faced by the sender in step (1), the goal of which is to maximize the sender's expected utility. When $\varphi$  yields expected sender utility within an additive [multiplicative] $\eps$ of the best possible, we say it is \emph{$\eps$-optimal} [$\eps$-approximate]  in the additive [multiplicative] sense.

\subsection{Private Signaling Schemes}
\label{prelim:private}
A \emph{private signaling scheme} $\varphi$ is a   randomized map from the set of states of nature $\Theta$ to a set of \emph{signal profiles} $\Sigma= \Sigma_1 \times \Sigma_2...\times \Sigma_n$, where $\Sigma_i$ is the \emph{signal set} of receiver $i$. 
We use $\varphi(\theta,\sigma)$ to denote the probability of selecting the signal profile $\sigma=(\sigma_1,\ldots,\sigma_n) \in \Sigma$ given a state of nature $\theta$. Therefore, $\sum_{\sigma \in \Sigma} \varphi(\theta,\sigma) = 1$ for every $\theta$. With some abuse of notation, we use $\varphi(\theta)$ to denote the random signal profile selected by the scheme $\varphi$ given the state $\theta$. 
Moreover,  for each $\theta \in \Theta$, $i \in [n]$, and $\sigma_i \in \Sigma_i$, we use $\varphi_i(\theta,\sigma_i) = \Pr[ \varphi_i(\theta) = \sigma_i]$ to denote the marginal probability that receiver $i$ receives signal $\sigma_i$ in state $\theta$.  
An algorithm \emph{implements} a signaling scheme $\varphi$ if it takes as input a state of nature $\theta$, and samples the random variable $\varphi(\theta)$.

Given a signaling scheme $\varphi$, each signal $\sigma_i \in \Sigma_i$ for receiver $i$ is realized with probability $\Pr(\sigma_i) = \sum_{\theta \in \Theta} \lambda(\theta) \varphi_i(\theta,\sigma_i)$. Upon receiving $\sigma_i$, receiver $i$ performs a Bayesian update and infers a posterior belief over the state of nature, as follows: the realized state is $\theta$ with posterior probability $\lambda(\theta) \varphi_i(\theta,\sigma_i)/\Pr(\sigma_i)$. Receiver $i$ then takes the action maximizing his posterior expected utility. In case of indifference, we assume ties are broken in favor of the sender (i.e., in favor of action 1). Therefore, receiver $i$ chooses action $1$ if 
\begin{equation*}
	\frac{1}{\Pr(\sigma_i)}\sum_{\theta \in \Theta} \lambda(\theta) \varphi_i(\theta,\sigma_i) u_i(\theta,1) \geq \frac{1}{\Pr(\sigma_i)} \sum_{\theta \in \Theta} \lambda(\theta) \varphi_i(\theta,\sigma_i) u_i(\theta,0),
\end{equation*}
or equivalently
\begin{equation*}
	\sum_{\theta \in \Theta} \lambda(\theta) \varphi_i(\theta,\sigma_i) u_i(\theta) \geq 0,
\end{equation*}
where $u_i(\theta) = u_i(\theta,1) - u_i(\theta,0)$. 

A simple revelation-principle style argument \cite{Kamenica2011,Arieli2016}  shows that there exist an optimal private signaling scheme which is  \emph{direct} and \emph{persuasive}.
By \emph{direct} we mean that signals correspond to actions --- in our setting $\Sigma_i = \set{0,1}$ for each receiver $i$ --- and can be interpreted as action recommendations. A direct scheme is \emph{persuasive} if the strategy profile where all receivers follow their recommendations forms an equilibrium of the resulting Bayesian game.\footnote{Persuasiveness has also been called \emph{incentive compatibility} or \emph{obedience} in prior work.} Due to the absence of inter-receiver externalities in our setting, such an equilibrium would necessarily also satisfy the stronger property of being a dominant-strategy equilibrium --- i.e., each receiver $i$ maximizes his posterior expected utility by following the recommendation, regardless of whether other receivers follow their recommendations. 

When designing private signaling schemes, we  restrict attention (without loss) to direct and persuasive schemes. Here, a signal profile can be equivalently viewed as a set $S \sse [n]$ of receivers --- namely, the set of receivers who are recommended action $1$. Using this alternative representation, a scheme can be specified by variables $\varphi(\theta,S)$ for all $\theta \in \Theta, S \subseteq [n]$. We can now encode the sender's optimization problem of computing the optimal scheme using the following exponentially-large linear program; note the use of auxiliary variables $x_{\theta,i}$ to denote the marginal probability of recommending action $1$ to receiver $i$ in state $\theta$. 

\begin{lp}\label{lp:opt0}
	\maxi{\sum_{\theta \in \Theta} \lambda(\theta) \sum_{S\subseteq [n]} \varphi(\theta,S)  f_{\theta}(S) }
	\st 
	\qcon{\sum_{S:i\in S} \varphi(\theta,S) = x_{\theta,i}}{i \in [n], \theta \in \Theta}
	\qcon{\sum_{\theta \in \Theta}\lambda(\theta) x_{\theta,i} u_i(\theta) \geq 0}{i = 1,...,n}
	\qcon{\sum_{S \subseteq [n]} \varphi(\theta,S) = 1}{\theta \in \Theta}
	\qcon{ \varphi(\theta,S) \geq 0}{\theta \in \Theta; S \subseteq [n]}
\end{lp}

The second set of constraints in LP~\eqref{lp:opt0} are \emph{persuasiveness constraints}, and state that each receiver $i$ should maximize his utility by taking action $1$ whenever that action $1$ is recommended. Note that the persuasiveness constraints for action $0$, which can be written as $\sum_{\theta \in \Theta}\lambda(\theta) (1-x_{\theta,i}) u_i(\theta) \leq 0$ for each $i \in [n]$, are intentionally omitted from this LP.  This omission is without loss when $f_\theta$ is a non-decreasing set function for each $\theta$: any solution to the LP in which a receiver prefers action $1$ when recommended action $0$ can be improved by always recommending action $1$ to that receiver.

\subsection{Public Signaling Schemes}
\label{prelim:public}
A public signaling scheme $\pi$ can be viewed as a special type of private signaling schemes in which each receiver must receive the same signal, i.e., only a public signal is sent. Overloading the notation  of Section~\ref{prelim:private}, we use $\Sigma$ to denote the set of public signals and $\sigma \in \Sigma$ to denote a public signal. A public signaling scheme $\pi$ is fully specified by  $\{ \pi(\theta, \sigma) \}_{\theta, \sigma}$, where  $\pi(\theta, \sigma)$ denotes the probability of sending signal $\sigma$ at state $\theta$.  Upon receiving a signal $\sigma$, each receiver performs the same Bayesian update and infers a posterior belief over the state of nature, as follows: the realized state is $\theta$ with probability $\lambda(\theta) \pi(\theta,\sigma)/\Pr(\sigma)$, where $\Pr(\sigma) = \sum_{\theta \in \Theta}  \pi(\theta,\sigma)$.  This induces a subgame for each signal $\sigma$, one in which all receivers share the same belief regarding the state of nature. 

Whereas in more general settings than ours receivers may play a mixed Nash equilibrium in each subgame, our restriction to a setting with no externalities removes this complication. Given a posterior distribution on states of nature (say, one induced by a signal $\sigma$),  our receivers face disjoint single-agent decision problems, each of which admits an optimal pure strategy. We assume that receivers break ties in favor of the sender (specifically, in favor of action 1), which distinguishes a unique pure response for each receiver. Therefore, our solution concept here distinguishes a unique action profile for each posterior distribution, and hence for each signal. A simple revelation-principle style argument then allows us to conclude that there is an optimal public signaling schemes which is \emph{direct}, meaning that the public signals are action profiles, and \emph{persuasive}, meaning that in the subgame induced by signal $\sigma=(\sigma_1,\ldots,\sigma_n)$ each receiver $i$'s optimal decision problem (which breaks ties in favor of action $1$) solves to action $\sigma_i$. 

Restricting attention to direct and persuasive public signaling schemes, each signal can also be viewed as a subset $S \subseteq [n]$ of receivers taking action $1$. The sender's optimization problem 
can then be written as the following exponentially-large linear program.
\begin{lp}\label{lp:optPub}
	\maxi{\sum_{\theta \in \Theta} \lambda(\theta) \sum_{S\subseteq [n]} \pi(\theta,S)  f_{\theta}(S) }
	\st 
	\qcon{\sum_{\theta \in \Theta}\lambda(\theta) \pi(\theta,S) \cdot  u_i(\theta) \geq 0}{S \sse [n] \mbox{ with } i \in S}
	\qcon{\sum_{S \subseteq [n]} \pi(\theta,S) = 1}{\theta \in \Theta}
	\qcon{ \pi(\theta,S) \geq 0}{\theta \in \Theta; S \subseteq [n]}
\end{lp}

The first set of constraints are persuasiveness constraints corresponding to action $1$. Note that the persuasiveness constraints for action $0$, which can be written as $\sum_{\theta \in \Theta}\lambda(\theta) \pi(\theta,S) \cdot  u_i(\theta) \leq 0$ for each $S \sse [n]$ and $i \not\in S$, are intentionally omitted from this LP. This omission is without loss when $f_\theta$ is non-decreasing for each state $\theta$: if signal $S$ with $i \not\in S$ is such that receiver $i$ prefers action $1$ in the resulting subgame, then we can replace it with the signal $S \union i$ without degrading the sender's utility.  We remark that LP \eqref{lp:optPub} and LP \eqref{lp:opt0}  only differ in their persuasiveness constraints. 

\subsection{Input Models}

We  distinguish two input models for describing persuasion instances in this paper. The first is the \emph{explicit} model, in which the prior distribution $\lambda$ is given explicitly as a probability vector. The second is the \emph{sample oracle} model, where $\Theta$ and $\lambda$ are provided implicitly through sample access to $\lambda$.   In both models, we assume that given a state of nature $\theta$, we can efficiently evaluate $u_i(\theta)$ for each $i \in [n]$ and $f_\theta(S)$ for each $S \sse [n]$. In the explicit input model, by \emph{computing} a signaling scheme $\varphi$ (whether private or public) we mean that we explicitly list the non-zero variables $\varphi(\theta,S)$ on which the scheme is supported. In the implicit model, computing a signaling scheme $\varphi$ (whether private or public) amounts to providing an algorithm which takes as input a state of nature $\theta$, and samples the random variable~$\varphi(\theta)$.

\subsection{Set Functions and Submodularity}
Given a finite ground set $X$, a \emph{set function} is a map $f: 2^X \to \RR$.  Such a function is  \emph{nonnegative} if $f(S) \geq 0$ for all $S \sse X$,  \emph{monotone non-decreasing} (or \emph{monotone} for short) if $f(S) \leq f(T)$ for all $S \sse T$. Most importantly, $f$ is   \emph{submodular} if  for any $S,T \subseteq X$, we have $f(S \cup T) + f(S \cap T) \leq f(S) + f(T)$. 

We also consider continuous functions $G$ from the solid hypercube $[0,1]^{X}$ to the real numbers. Such a function is  \emph{nonnegative} if $G(x) \geq 0$ for all $x$, \emph{monotone non-decreasing} (or \emph{monotone} for short) if $G(x) \leq G(y)$ whenever $x \preceq y$ (coordinate wise), and \emph{smooth submodular} (in the sense of \cite{calinescu2011}) if its second partial derivatives exist and are non-positive everywhere.

\vspace{4mm}
\noindent {\bf The Multilinear Extension of a Set Function.} Given any set function $f:2^X \to \RR$, the \emph{multilinear extension} of $f$ is the continuous function $F:[0,1]^X \to \RR$ defined as follows:
\begin{equation}\label{eq:multilinear}
	F(x) = \sum_{S\subseteq X} f(S) \prod_{i \in S}x_i \prod_{i \not \in S}(1-x_i),
\end{equation}
Notice that, $F(x)$ can be viewed as the expectation of $f(S)$ when the random set $S$ independently includes each element $i$ with probability $x_i$. In particular, let $p^I_x$ denote the \emph{independent distribution} with marginals $x$, defined by $p^I_x(S) = \prod_{i \in S}x_i \prod_{i \not \in S}(1-x_i)$, then $F(x) = \Ex_{S \sim p^I_x} f(S)$.
If $f$ is nonnegative/monotone then so is $F$. Moreover, if $f$ is submodular then $F$ is smooth submodular. For our results, we will need to maximize $F(x)$ subject to a set of linear constraints on $x$. This problem is NP-hard in general, yet can be approximated by the \emph{continuous greedy process} of  Calinescu et al. \cite{calinescu2011} for fairly general families of constraints. 
Note that though we cannot exactly evaluate $F(x)$ in polynomial time,  it is sufficient to approximate $F(x)$ within a good precision in order to apply the continuous greedy process. By an additive FPTAS evaluation oracle for $F$, we mean an algorithm that evaluates $F(x)$ within additive error $\eps$ in $\poly(n, \frac{1}{\eps})$ time. 

\begin{theorem}[Adapted form \cite{calinescu2011}]\label{thm:greedy}
	Let $F: [0,1]^n \to [0,1]$ be a non-negative, monotone, smooth submodular function. Let $\P \subseteq [0,1]^n$ be a down-monotone polytope\footnote{A polytope $\P \sse \RRp^n$ is called \emph{down-monotone} if for all $x,y \in \RRp^n$, if $y \in P$ and $x \preceq y$ (coordinate-wise) then $x \in P$.}, specified explicitly by its linear constraints.  Given an additive FPTAS evaluation oracle for $F$,  there is a $\poly(n,\frac{1}{\eps})$ time algorithm that outputs $\bar{x} \in \P$ such that $F(\bar{x}) \geq (1 - \frac{1}{e})OPT - \eps$, where $OPT = \max_{x \in \P} F(x)$. 
\end{theorem}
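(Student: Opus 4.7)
The plan is to follow the continuous greedy framework of \cite{calinescu2011} essentially verbatim, with the only subtlety being that the value oracle for $F$ is replaced by an additive FPTAS oracle; we must track how this weaker access propagates through the analysis but show it only costs an additive $\eps$ at the end.

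\medskip

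\noindent\textbf{Step 1 (the continuous greedy trajectory).} I will define a trajectory $\{x^{(t)}\}_{t \in [0,1]}$ in $\P$ by starting at $x^{(0)} = 0$ (which lies in $\P$ by down-monotonicity) and evolving it according to $\frac{dx^{(t)}}{dt} = v^{(t)}$, where $v^{(t)} \in \P$ is chosen to maximize (approximately) the linear functional $v \mapsto v \cdot \nabla F(x^{(t)})$ over $\P$. In practice this is discretized into $T = \poly(n,1/\eps)$ time steps of size $\delta = 1/T$, producing iterates $x^{(0)}, x^{(1)}, \dots, x^{(T)}$ via $x^{(k+1)} = x^{(k)} + \delta \cdot v^{(k)}$. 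Down-monotonicity of $\P$ combined with $v^{(k)} \in \P$ ensures $x^{(k)} \in \P$ throughout (using that $x^{(k)} \preceq \sum_j \delta v^{(j)}$, a convex combination of points in $\P$).

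\medskip

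\noindent\textbf{Step 2 (approximating the gradient and the LP).} At each step I need to approximate each partial derivative $\partial_i F(x^{(k)}) = F(x^{(k)} \vee e_i) - F(x^{(k)} \vee 0_i)$ to within additive $\eps' = \poly(\eps/n)$; this can be done by calling the additive FPTAS oracle with precision $\eps'/2$ twice per coordinate, a total of $O(n)$ oracle calls per step. Let $\widetilde{\nabla} F(x^{(k)})$ denote the resulting approximate gradient; by construction $\|\widetilde{\nabla} F(x^{(k)}) - \nabla F(x^{(k)})\|_\infty \leq \eps'$. I then solve the explicit LP $\max\{ v \cdot \widetilde{\nabla} F(x^{(k)}) : v \in \P \}$, which is polynomial-time since $\P$ is given by its linear constraints. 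Let $x^*$ be the true maximizer $\arg\max_{x \in \P} F(x)$. Because entries of $v$ lie in $[0,1]$, the linear optimum satisfies
\[
v^{(k)} \cdot \nabla F(x^{(k)}) \; \geq \; x^* \cdot \nabla F(x^{(k)}) \; - \; 2 n \eps'.
\]

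\medskip

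\noindent\textbf{Step 3 (the key monotonicity--submodularity inequality).} The heart of the argument, exactly as in \cite{calinescu2011}, is the pointwise bound
\[
x^* \cdot \nabla F(x^{(k)}) \; \geq \; F(x^{(k)} \vee x^*) - F(x^{(k)}) \; \geq \; F(x^*) - F(x^{(k)}) \; = \; OPT - F(x^{(k)}),
\]
where the first inequality uses smooth submodularity (non-positive second partials, hence concavity along non-negative directions), and the second uses monotonicity. Combining with Step 2, the expected gain at step $k$ is
\[
F(x^{(k+1)}) - F(x^{(k)}) \; \geq \; \delta \bigl( OPT - F(x^{(k)}) \bigr) \; - \; O(\delta n \eps') \; - \; O(\delta^2 n^2),
\]
where the $O(\delta^2 n^2)$ term bounds the second-order Taylor error from moving by $\delta v^{(k)}$ (using that $|\partial_i \partial_j F| \leq 2$).

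\medskip

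\noindent\textbf{Step 4 (integration and parameter choice).} Unrolling the recursion $F(x^{(k+1)}) \geq (1-\delta) F(x^{(k)}) + \delta \cdot OPT - \eta$ with $\eta = O(\delta n \eps') + O(\delta^2 n^2)$ gives $F(x^{(T)}) \geq (1 - (1-\delta)^T) OPT - T \eta \geq (1 - 1/e) OPT - T \eta$. Choosing $T = \Theta(n^2/\eps)$ and $\eps' = \Theta(\eps/n^2)$ yields the total error below $\eps$, and the whole procedure runs in $\poly(n, 1/\eps)$ time since each of the $T$ steps makes $O(n)$ oracle calls and solves one explicit LP. The output $\bar{x} = x^{(T)} \in \P$ then satisfies $F(\bar{x}) \geq (1-1/e) OPT - \eps$, as required.

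\medskip

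\noindent\textbf{Main obstacle.} The original \cite{calinescu2011} proof assumes a value oracle that can evaluate $F$ exactly (or via Monte Carlo sampling through the multilinear extension definition). Here the only access is the given additive FPTAS, so the main care needed is bookkeeping: showing that the error $\eps'$ in approximating partial derivatives accumulates only linearly in $T$ and $n$, and choosing $\delta$ and $\eps'$ so both the discretization error and the gradient-estimation error stay below $\eps$ while keeping the oracle-call count polynomial. No new conceptual ingredient beyond \cite{calinescu2011} is needed.
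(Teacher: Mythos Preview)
The paper does not prove this theorem; it is stated in the preliminaries as a known result adapted from \cite{calinescu2011} and is used as a black box (in Claim~\ref{claim:multi}). Your sketch is a faithful reconstruction of the continuous greedy process of \cite{calinescu2011}, and the ``main obstacle'' you identify --- propagating the additive evaluation error through the gradient estimates and the recursion --- is exactly the bookkeeping needed to justify the word ``adapted'' in the theorem's attribution.

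One small correction: your formula $\partial_i F(x) = F(x \vee e_i) - F(x \vee 0_i)$ in Step~2 holds for multilinear extensions (which are affine in each coordinate), but the theorem as stated allows a general smooth submodular $F$, for which this identity need not hold. For general $F$ one estimates $\partial_i F(x)$ by a finite difference $\bigl(F(x + h e_i) - F(x)\bigr)/h$ with $h = \poly(\eps/n)$; smooth submodularity bounds the second-order error by $O(h)$, and choosing $h$ and the oracle precision appropriately keeps the total error under control just as in your Step~4. This does not change the structure of your argument, and in the paper's actual application (Claim~\ref{claim:multi}) the objective \emph{is} a non-negative combination of multilinear extensions, so your formula happens to be valid there.
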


\vspace{4mm}
\noindent {\bf Correlation Gap.}  A general definition of the correlation gap can be found in \cite{Agrawal2010}. For our results, the following simple definition will suffice. Specifically, for any $x \in [0,1]^{X}$, let $D(x)$ be the set of all distributions $p$ over $2^X$ with fixed marginal probability $\Pr_{S \sim p}(i \in S) = x_i$ for all $i$. Let $p_x^I$, as defined above, be the independent distribution with marginal probabilities $x$. Note that $p_x^I \in D(x)$. For any set function $f(S)$, the correlation gap $\kappa$ is defined as follows:
\begin{equation}\label{eq:corGap}
	\kappa = \max_{x \in [0,1]^X} \max_{p \in D(x)} \frac{ \Ex_{S\sim p} f(S)}{ \Ex_{S \sim p^I_x} f(S)}.
\end{equation}
Loosely speaking, the correlation gap upper bounds the ``loss" of the expected function value over a random set by ignoring the correlation in the randomness. 

\begin{theorem}[\cite{Agrawal2010}]\label{thm:corgap}
	The correlation gap $ \kappa$ is upper bounded by $\frac{e}{e-1}$ for any non-negative monotone non-decreasing submodular function.
\end{theorem}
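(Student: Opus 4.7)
The plan is to reduce the claim to the functional inequality $F(x) \geq (1 - 1/e) f^+(x)$ for every $x \in [0,1]^X$, where $F$ is the multilinear extension from \eqref{eq:multilinear} and $f^+(x) := \max_{p \in D(x)} \Ex_{S \sim p} f(S)$ is the \emph{concave closure} of $f$. Since the denominator in \eqref{eq:corGap} equals $F(x)$ and the inner maximum in the numerator equals $f^+(x)$, we have $\kappa = \max_x f^+(x)/F(x)$, so this functional inequality immediately yields $\kappa \leq e/(e-1)$.

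To prove the functional inequality, I first establish the following \emph{gradient inequality}: for every $y \in [0,1]^X$,
\begin{equation*}
    \nabla F(y) \cdot x \;\geq\; f^+(x) - F(y).
\end{equation*}
Fix an optimal convex combination $f^+(x) = \sum_S \alpha^*_S f(S)$, where $\alpha^*_S \geq 0$, $\sum_S \alpha^*_S = 1$, and $\sum_S \alpha^*_S \mathbf{1}_S = x$. Submodularity of $f$ implies that the mixed partials of $F$ are non-positive (the partial $\partial^2 F / \partial y_i \partial y_j$ equals $\Ex[f(R \cup \{i,j\}) - f(R\cup\{i\}) - f(R \cup \{j\}) + f(R)]$ for $R$ drawn from the independent distribution on the remaining coordinates), so $F$ is concave along every coordinate-wise nonnegative direction. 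Concavity along the segment from $y$ to $y \vee \mathbf{1}_S$ then gives $F(y \vee \mathbf{1}_S) - F(y) \leq \nabla F(y) \cdot (y \vee \mathbf{1}_S - y) = \sum_{i \in S}(1 - y_i) \partial_i F(y) \leq \nabla F(y) \cdot \mathbf{1}_S$, where the last step uses $1 - y_i \leq 1$ together with $\partial_i F(y) \geq 0$ (from monotonicity of $f$). Combined with $F(y \vee \mathbf{1}_S) \geq F(\mathbf{1}_S) = f(S)$ (monotonicity again), we obtain $\nabla F(y) \cdot \mathbf{1}_S \geq f(S) - F(y)$; averaging this inequality against $\alpha^*$ yields the gradient inequality.

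Next, I run the ``straight-line'' continuous process $y(t) := t x$ for $t \in [0,1]$. The chain rule and the gradient inequality combine to give
\begin{equation*}
    \frac{d}{dt} F(y(t)) \;=\; \nabla F(y(t)) \cdot x \;\geq\; f^+(x) - F(y(t)).
\end{equation*}
Writing $g(t) := F(y(t))$, this becomes $g'(t) + g(t) \geq f^+(x)$ with $g(0) = 0$; multiplying by the integrating factor $e^t$ and integrating yields $g(1) \geq (1 - e^{-1}) f^+(x)$, so $F(x) = g(1) \geq (1 - 1/e) f^+(x)$, as desired.

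The main obstacle is the gradient inequality, which fuses both monotonicity of $f$ (used to guarantee $\partial_i F \geq 0$ as well as $F(y \vee \mathbf{1}_S) \geq f(S)$) and submodularity (used to obtain concavity of $F$ along nonnegative directions, and thus to linearize $F$ from above via its gradient). Once that ingredient is in hand, the remaining ODE comparison is routine; the overall structure mirrors the continuous-greedy analysis underlying Theorem~\ref{thm:greedy}, which is reassuring given that the correlation-gap bound and the $(1-1/e)$-approximation for submodular maximization are known to be close cousins.
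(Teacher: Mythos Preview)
The paper does not actually prove this theorem; it is stated as a known result from \cite{Agrawal2010} and invoked as a black box (specifically in the proof of Lemma~\ref{lem:relax}). So there is no ``paper's own proof'' to compare against.

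That said, your argument is correct. The reduction to the functional inequality $F(x) \geq (1-1/e)\,f^+(x)$ is exactly right, and your proof of that inequality via the gradient bound $\nabla F(y)\cdot x \geq f^+(x) - F(y)$ together with the ODE comparison along $y(t)=tx$ is clean and complete. Each ingredient checks out: the Hessian of $F$ has zero diagonal (multilinearity) and nonpositive off-diagonal (submodularity), giving concavity along nonnegative directions; monotonicity gives both $\partial_i F \geq 0$ and $F(y\vee\mathbf{1}_S)\geq f(S)$; and the integrating-factor step uses $g(0)=f(\emptyset)\geq 0$. As you note, this is precisely the continuous-greedy analysis of \cite{calinescu2011} specialized to the unconstrained point $x$, which is one of the standard routes to the correlation-gap bound. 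The original proof in \cite{Agrawal2010} goes through a somewhat different cost-sharing/LP-duality argument, but the multilinear-extension route you take is by now the more common and arguably more transparent one.
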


\section{Equivalence Between Private Signaling and Objective Maximization}
\label{sec:equivalence}
In this section, we relate the computational complexity of private persuasion to the complexity of maximizing the sender's objective function, and show that the optimal private signaling  scheme can be computed efficiently for a broad class of sender objectives. Let $\F$ denote any collection of monotone set functions. We use $\I (\F)$ to denote the class of all persuasion instances in our model in which the sender utility function $f_\theta$ is in $\F$ for all states of nature $\theta$. We restrict attention to the explicit input model for most of this discussion, though discuss how to extend our results to the sample oracle model, modulo an arbitrarily small additive loss in both the sender's objective and the persuasiveness constraints, at the end of this section.   

The following theorem establishes the polynomial-time equivalence between computing the optimal private signaling scheme and the problem of maximizing the objective function plus an additive function. Note that although the number of variables in LP \eqref{lp:opt0} is exponential in the number of receivers, a vertex optimal solution of this LP is supported on $O(n |\Theta|)$ variables. 

\begin{theorem}\label{thm:equivalence}
	Let $\F$ be any collection of monotone set functions. There is a polynomial-time algorithm which computes the optimal private signaling scheme given any  instance in $\I (\F)$ if and only if there is a polynomial time algorithm for maximizing $f(S) + \sum_{i\in S} w_i$ given any $f \in \F$ and any set of weights $w_i \in \RR$. 
\end{theorem}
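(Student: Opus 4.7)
The plan is to prove both directions of the equivalence.

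For ``set-maximization $\Rightarrow$ optimal signaling'', my plan is to dualize LP~\eqref{lp:opt0}. Although the primal has exponentially many variables $\varphi(\theta, S)$, the dual has only $n + |\Theta|$ variables after simplification. Introducing multipliers $y_{\theta,i}$ (free) for the marginal equalities, $z_i \geq 0$ for the persuasiveness constraints, and $\alpha_\theta$ (free) for normalization, the dual constraint stemming from the free variable $x_{\theta,i}$ yields the equality $y_{\theta,i} = -\lambda(\theta) u_i(\theta) z_i$. Eliminating $y$ gives the simplified dual
\[
\min \sum_\theta \alpha_\theta \quad\text{s.t.}\quad \alpha_\theta \geq \lambda(\theta)\Big[f_\theta(S) + \sum_{i \in S} u_i(\theta) z_i\Big] \;\; \forall \theta, S; \quad z \geq 0.
\]
The separation problem for this dual, on input $(z, \alpha)$, reduces for each $\theta$ to computing $\max_S f_\theta(S) + \sum_{i \in S} w_i$ with $w_i = u_i(\theta) z_i \in \RR$. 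Since the weights $w_i$ have arbitrary sign (as $u_i(\theta)$ can be any real and $z_i \geq 0$), having a polynomial-time algorithm for the set-maximization problem immediately yields a polynomial-time separation oracle for the dual. By the Gr\"otschel--Lov\'asz--Schrijver equivalence of separation and optimization, the dual is solvable in polynomial time. To recover a primal optimum (an optimal signaling scheme with polynomial support), I take the polynomially many dual constraints queried during the ellipsoid run and solve the restricted primal that uses only the corresponding $\varphi(\theta, S)$ variables.

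For the reverse direction (``signaling $\Rightarrow$ set-maximization''), given arbitrary $(f, w)$ with $f \in \F$ and $w \in \RR^n$, I construct a persuasion instance in $\I(\F)$ whose sender-optimal value determines $\max_S f(S) + \sum_{i} w_i \mathbf{1}[i \in S]$. The plan is to use a small-state instance with $f_\theta = f$ at a principal state (and an $f_\theta \in \F$ chosen conveniently at auxiliary states), tuning the priors $\lambda(\cdot)$ and utility differences $u_i(\theta)$ so that persuasiveness imposes per-receiver ``prices'' matching the weights $w_i$. Reading off the LP value via the dual characterization derived above (now interpreted in reverse) then recovers the target set maximum, up to a known rescaling. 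A single call to the signaling oracle on this instance, together with complementary-slackness-style inspection of the returned scheme's support, should reveal the maximizer $S^*$.

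The main obstacle will be the reverse-direction construction: because $\F$ consists only of monotone set functions, arbitrary weights $w$ (which may have negative entries) cannot be folded directly into any $f_\theta$, and must instead be encoded indirectly through the persuasiveness constraints; moreover, monotonicity of each $f_\theta$ needs to be exploited --- via the asymmetry between action-$1$ and action-$0$ persuasiveness constraints built into LP~\eqref{lp:opt0} --- in order for this encoding to correctly isolate the quantity $\max_S f(S) + \sum_{i \in S} w_i$ rather than some relaxation of it. Verifying correctness reduces to checking that, for the specific $(\lambda, u)$ chosen, the simplified dual above evaluates exactly to the desired set maximum.
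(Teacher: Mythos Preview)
Your forward direction is correct and essentially identical to the paper's: dualize LP~\eqref{lp:opt0}, observe that separation over the dual reduces to maximizing $f_\theta(S)$ plus an additive function, and invoke the ellipsoid method together with the equivalence of separation and optimization to recover a polynomially-supported primal optimum.

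The reverse direction, however, has a genuine gap. You propose to construct a \emph{single} persuasion instance and read off $S^* = \argmax_S f(S) + \sum_{i\in S} w_i$ from the support of the returned optimal scheme via complementary slackness. But complementary slackness tells you that the sets in the support of an optimal $\varphi(\theta,\cdot)$ are exactly the maximizers of $f_\theta(S) + \sum_{i\in S} u_i(\theta) z^*_i$, where $z^*$ is the \emph{optimal dual variable for that instance}, not your target weight vector $w$. You control the instance data $(\lambda, u)$, but $z^*$ is determined endogenously by the optimization; there is no evident way to choose $(\lambda,u)$ so that the resulting $z^*$ equals $-w$ (even after reducing, as one may, to the case $w \le 0$ by monotonicity). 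In effect, a single signaling call evaluates a concave-closure-type function $\hat f(a)$ at a point $a$ you choose, whereas what you need is its Legendre conjugate at a point $w$ you choose --- and one evaluation of the former does not yield the latter.

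The paper closes this gap by going through the full separation--optimization equivalence in the reverse direction as well. It first reduces $\max_S f(S)+\sum_{i\in S} w_i$ (via binary search) to separation over the polyhedron $\P=\{(z,v): v+\sum_{i\in S} z_i \ge f(S)\ \forall S\}$; then reduces separation over $\P$ to linear optimization over $\P$ for \emph{arbitrary} objectives $(a,b)$; then shows by a short case analysis that only $b=1$, $a\in[0,1]^n$ matters; and finally observes that for each such $a$ the dual of this LP is exactly a two-state signaling instance in $\I(\F)$ with $u_i(\theta_0)=-1/a_i$. Crucially, this is a polynomial-time \emph{Turing} reduction: the ellipsoid method issues polynomially many separation queries, each translating into a different signaling instance (with different $a$), so polynomially many calls to the signaling oracle are needed. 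Your plan should be amended to incorporate this many-call structure rather than a single call.
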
 
\begin{proof}
	We first reduce optimal private signaling to maximizing the objective function plus an additive function, via linear programming duality.  In particular,  consider the following dual program of LP \eqref{lp:opt0}  with variables $w_{\theta,i}, \alpha_i, y_{\theta}$.
	\begin{lp}\label{lp:opt0Dual}
		\mini{\sum_{\theta \in \Theta} y_{\theta}}
		\st 
		\qcon{\sum_{i \in S}w_{\theta,i} + y_{\theta} \geq \lambda(\theta) f_{\theta}(S)}{S \subseteq [n], \theta \in \Theta}
		\qcon{w_{\theta,i} + \alpha_i \lambda(\theta) u_i(\theta) = 0} {i = 1,...,n}
		\qcon{\alpha_i \geq 0}{ i \in [n]}
	\end{lp}
	We can obtain a  separation oracle for   LP \eqref{lp:opt0Dual} given an algorithm  for maximizing $f_{\theta}(S)$ plus an additive function. Given any variables $w_{\theta,i}, \alpha_i, y_{\theta}$, separation over the first set of constraints reduces to maximizing the set function $g_\theta(S) = f_{\theta}(S) - \frac{1}{\lambda(\theta)} \sum_{i\in S} w_{\theta,i}$ for each $\theta \in \Theta$. The other constraints can be checked directly in linear time.  Given the resulting separation oracle, we can use the Ellipsoid method to obtain a vertex optimal solution to both  LP \eqref{lp:opt0Dual} and its dual LP \eqref{lp:opt0} in polynomial time \cite{GLSbook}.
	

	We now prove the converse.  Namely, we construct a polynomial-time Turing reduction from the problem of maximizing $f$ plus an additive function to a private signaling problem in $\I (\F)$. At a high level, we first reduce the set function maximization problem to a certain linear program, and then prove that solving the dual of  the LP reduces to optimal private signaling for a set of particularly constructed instances in $\I (\F)$. 
	
	Given $f\in F$ and weights $w$, our reduction concerns the following linear program, parameterized by $\mathbf{a} = (a_1,...,a_n)$ and $b$, with variables $\mathbf{z} = (z_1,...,z_n)$ and $v$.   
	\begin{lp}\label{lp:ClosureDual}
		\mini{\sum_{ i \in  [n]} a_i z_i + b v} 
		\st
		\qcon{\sum_{i \in S} z_i +  v \geq f(S) }{ S  \subseteq  [n]}
	\end{lp}
	Let  $\P$ denote the feasible region of LP \eqref{lp:ClosureDual}. As the first step of our reduction,  we reduce maximizing the set function $g_w(S) = f(S) + \sum_{i\in S} w_i$   to the separation problem for $\P$. Let $z_i = -w_i$ for each $i$.  Notice that $(\mathbf{z}, v) $ is feasible (i.e., in $\P$) if and only if $v \geq \max_{S \subseteq [n]} f(S) -  \sum_{i \in S} z_i$.   Therefore, we can binary search for a value  $\tilde{v}$ such that $(\mathbf{z},\tilde{v})$ is almost feasible, but not quite. More precisely,  let $B$ denote the bit complexity of the $f(S)$'s and the $w_i$'s,  then binary search returns the exact optimal value of the set function maximization problem after $\O(B)$ steps. We then set $\tilde{v}$ to equal that value minus $2^{-B}$. Feeding $(\mathbf{z}, \tilde{v})$ to the separation oracle, we obtain a violated constraint which must correspond to the maximizer of  $f(S) + \sum_{i \in S} w_i$.
	
	As the second step of our reduction, we reduce the separation problem for $\P$ to solving LP \eqref{lp:ClosureDual} for every choice of objective coefficients $\mathbf{a}$ and $b$. This polynomial-time Turing reduction follows from the equivalence of separation and optimization \cite{GLSbook}.
	
	Third, we reduce solving LP \eqref{lp:ClosureDual} for arbitrary $\mathbf{a}$ and $b$ to the special case where $\mathbf{a} \in [0,1]^n$ and $b=1$. The reduction involves a case analysis. (a) If any of the objective coefficients are negative, then the fact that $\P$ is upwards closed implies that LP \eqref{lp:ClosureDual} is unbounded. (b) If $b=0$ and $a_i >0$ for some $i$, then the LP is unbounded since we can make $v$ arbitrarily small and $z_i$ arbitrarily large. Normalizing by dividing by $b$, we have reduced the problem to the case when $b=1$ and $a \succeq 0$ (coordinate-wise). (c) Now suppose that $a_i>1=b$ for some $i$; the LP is unbounded by making $z_i$ arbitrarily small and $v$ arbitrarily large. This analysis leaves the case of $b=1$ and $\mathbf{a} \in [0,1]^n$.
	
	Fourth, we reduce   LP \eqref{lp:ClosureDual} with parameters $\mathbf{a} \in [0,1]^n$ and $b=1$ to its dual shown below, with variables $p_S$ for $S \sse [n]$.
	\begin{lp}\label{lp:Closure}
		\maxi{\sum_{S \subseteq [n]} p_S f(S)} 
		\st
		\qcon{\sum_{S: i \in S} p_{S} \leq a_i }{ i \in [n]}
		\con{\sum_{S \subseteq [n]} p_S  = 1}
		\qcon{p_S \geq 0}{S \subseteq [n]}
	\end{lp}
	We note that LP~\eqref{lp:Closure} is not the standard dual of LP~\eqref{lp:ClosureDual}; in particular the first set of constraints are inequality rather than equality constraints. It is easy to see that LP~\eqref{lp:Closure} is equivalent to the standard dual  when $f$ is monotone non-decreasing, and  that an optimal solution to one of the two duals can be easily converted to an optimal solution of the other.

	The fifth and final step of our reduction will reduce  LP \eqref{lp:Closure} to a private signaling problem in $\I(\F)$.  There are $n$ receivers and two states of nature $\theta_0, \theta_1$ with $\lambda(\theta_0) = \lambda(\theta_1) = 1/2$. Define $u_i(\theta_1) = 1$ and $u_i (\theta_0) = -\frac{1}{a_i}$ ($-\infty$ if $a_i = 0$) for all $i$. The sender's utility function satisfies $f_{\theta_1} = f_{\theta_0} = f$. Let $\varphi^*$ be an optimal signaling scheme, in particular an optimal solution to the instantiation of LP \eqref{lp:opt0} for our instance. Note that all receivers prefer action 1 in state $\theta_1$; therefore, it is easy to see that $\varphi^*$ can be weakly improved, without violating the persuasiveness constraints, by modifying it to always recommend action 1 to all receivers when in state $\theta_1$. After this modification, $\varphi^*$ is an optimal solution to the following LP, which optimizes over all signaling schemes satisfying $\varphi(\theta_1,[n])=1$.
	\begin{lp}\label{lp:optInstance}
		\maxi{\frac{1}{2}f([n]) + \frac{1}{2}  \sum_{S\subseteq [n]} \varphi(\theta_0,S)  f(S) }
		\st 
		\qcon{\sum_{S:i\in S} \varphi(\theta_0,S) = x_{\theta_0,i}}{i \in [n]}
		\qcon{  x_{\theta_0,i}  \leq a_i}{i = 1,...,n}
		\con{\sum_{S \subseteq [n]} \varphi(\theta_0,S) = 1}
		\qcon{ \varphi(\theta_0,S) \geq 0}{\theta \in \Theta; S \subseteq [n]}
	\end{lp}  
	It is now easy to see that setting $p_S = \varphi^*(\theta_0, S)$ yields an optimal solution to LP~\eqref{lp:Closure}
\end{proof}

As an immediate corollary of Theorem \ref{thm:equivalence}, the optimal private signaling scheme can be computed efficiently  when the sender's objective function is supermodular or anonymous. Recall that a set function $f:2^{[n]} \to \RR$ is anonymous if there exists a function $g: \ZZ \to \RR$ such that $f(S) = g(|S|)$. 
\begin{corollary}\label{cor:equiv}
	There is a polynomial time algorithm for computing the optimal private signaling scheme when the sender objective functions are either supermodular or anonymous.
\end{corollary}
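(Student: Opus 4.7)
The plan is to invoke Theorem~\ref{thm:equivalence} and reduce the corollary to exhibiting polynomial-time algorithms for maximizing $g_w(S) = f(S) + \sum_{i \in S} w_i$ over $S \sse [n]$, in the two cases where $f$ is supermodular and where $f$ is anonymous. Since the theorem provides a polynomial-time equivalence between optimal private signaling and this augmented maximization problem, producing such algorithms suffices.

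For the supermodular case, I would observe that adding a modular function $h(S) = \sum_{i \in S} w_i$ preserves supermodularity: the sum of a supermodular and a modular function is supermodular. Hence $g_w = f + h$ is supermodular, and maximizing $g_w$ is equivalent to minimizing the submodular function $-g_w$. I would then appeal to the classical polynomial-time algorithms for unconstrained submodular function minimization (Gr\"otschel--Lov\'asz--Schrijver via the ellipsoid method, or the combinatorial algorithms of Schrijver and of Iwata--Fleischer--Fujishige). This yields an exact polynomial-time maximizer for $g_w$, and Theorem~\ref{thm:equivalence} then immediately gives an efficient algorithm for computing the optimal private scheme.

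For the anonymous case, recall that $f(S) = g(|S|)$ for some function $g : \{0,1,\ldots,n\} \to \RR$ which is efficiently evaluable (since $f$ is). I would exploit the fact that, for a fixed cardinality $k$, maximizing $g_w(S) = g(k) + \sum_{i \in S} w_i$ subject to $|S| = k$ reduces to choosing the $k$ indices with the largest values of $w_i$; call this value $W(k)$. After sorting the weights once in $O(n \log n)$ time, the quantities $W(0), W(1), \ldots, W(n)$ are all obtained in linear additional time via prefix sums. The overall maximum of $g_w$ is then $\max_{0 \le k \le n} \bigl[ g(k) + W(k) \bigr]$, computable in $O(n)$ additional time. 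Combining this with Theorem~\ref{thm:equivalence} yields the result.

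No step should present a serious obstacle: both reductions are direct. The only subtlety worth flagging is that Theorem~\ref{thm:equivalence} requires the maximization algorithm to work for \emph{arbitrary} real weights $w_i$, but submodular function minimization and the sort-and-scan procedure both handle arbitrary weights (subject to standard bit-complexity assumptions consistent with the ``explicit'' input model used throughout Section~\ref{sec:equivalence}).
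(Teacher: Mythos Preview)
Your proposal is correct and follows essentially the same approach as the paper: invoke Theorem~\ref{thm:equivalence}, then observe that supermodular plus additive is supermodular (hence maximizable in polynomial time via submodular minimization of the negation), and that anonymous plus additive is maximized by fixing each cardinality $k$, picking the top-$k$ weights, and enumerating over $k$. The paper's proof is terser---it simply asserts that unconstrained supermodular maximization is polynomial-time and gives the same enumerate-over-$k$ argument---but the content is identical.
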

\begin{proof}
	Since a supermodular function plus an additive function is still supermodular, and the problem of unconstrained supermodular maximization can be solved in polynomial time, Theorem \ref{thm:equivalence} implies that the optimal private signaling scheme can also be computed in polynomial time. As for anonymous objectives, there is a simple algorithm for maximizing an anonymous set function plus an additive function. In particular, consider the problem of maximizing $f(S) + \sum_{i\in S} w_i$ where $f(S) = g(|S|)$. Observe that fixing $|S| = k$, the optimal set $S_k$ corresponds to the $k$ highest-weight elements in $w$.  Enumerating all $k$ and choosing the best $S_k$ yields the optimal set.
\end{proof}

Finally, we make two remarks on Theorem \ref{thm:equivalence}, particularly on the reduction from optimal private signaling to set function maximization. First, we note that the assumption of monotonicity is not necessary to the reduction from signaling to optimization. In other words, even without the monotonicity assumption for  the sender's objective function, one can still efficiently compute the optimal private signaling scheme for instances in $\I(\F)$ given access to an  oracle for maximizing $f(S) + \sum_{i \in S} w_i$ for any  $f \in \F$ and weight vector $w$.  This can be verified by adding back the persuasiveness constraints for action $0$ to LP \eqref{lp:opt0} and examining the corresponding dual, which has similar structure to LP \eqref{lp:opt0Dual}. We omit the trivial details here.   Consequently, Corollary \ref{cor:equiv} applies to non-monotone supermodular or anonymous functions as well. 

Second, our reduction assumes that the prior distribution $\lambda$ over the state of nature is explicitly given. This can be generalized to the sample oracle model. In particular, when our only access to $\lambda$ is through random sampling, we can implement an $\eps$-optimal and $\eps$-persuasive\footnote{This is the natural relaxation of persuasiveness. In our setting, a receiver approximately maximizes his posterior expected payoff, to within an additive $\epsilon$, by following the scheme's recommendation. This is regardless of the actions of other receivers.}  private  signaling scheme in $poly(n,\frac{1}{\eps})$ time using an idea from \cite{Dughmi2016} (assuming $u_{i}(\theta) \in [-1,1]$). The algorithm is as follows: given any input state $\theta$, we first take $poly(n,\frac{1}{\eps})$ samples from $\lambda$, and then solve LP \eqref{lp:opt0}  on the empirical distribution of the samples plus $\theta$, with relaxed (by $\epsilon$) persuasiveness  constraints. Finally, we signal for $\theta$ as the solution to the LP suggests. The analysis of this algorithm is very similar to that in \cite{Dughmi2016}, thus is omitted here. Moreover,  the bi-criteria loss is inevitable in this oracle model due to information theoretic reasons \cite{Dughmi2016}.  

\section{Private Signaling with Submodular Objectives}\label{sec:submodular}
In this section we consider optimal private signaling for submodular sender objectives, and show that there is a polynomial time $(1-\frac{1}{e})$-approximation scheme, modulo an additive loss of $\eps$. This is almost the best possible: \citet{babichenko2016} show that even in the special case of two states of nature, it is NP-hard to approximate the optimal private signaling scheme within a factor better than $(1-\frac{1}{e})$ for monotone submodular sender objectives. 

\begin{theorem}\label{thm:multi}
	Consider private signaling with monotone submodular sender objectives. Let $OPT$ denote the optimal sender utility. For any $\epsilon > 0$, a private signaling scheme achieving expected sender utility at least $(1-\frac{1}{e})OPT - \epsilon$ can be implemented in $\poly(n,|\Theta|, \frac{1}{\epsilon})$ time. 
\end{theorem}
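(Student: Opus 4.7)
The plan is to decouple the proof into two steps: (i) a structural lemma that reduces the problem to approximating the best ``simple'' private signaling scheme, and (ii) an optimization step that applies continuous greedy to such simple schemes. A simple scheme of size $K$ is a uniform mixture of $K$ deterministic schemes $\Theta \to 2^{[n]}$. The structural lemma I aim to prove is: there always exists a simple scheme with $K = \poly(n, |\Theta|, 1/\eps)$ whose expected sender utility is at least $OPT - O(\eps)$. This step uses only monotonicity of $f_\theta$, not submodularity. The proof is by averaging: given an (unknown) optimal private scheme $\varphi^*$ with marginals $x^*_{\theta,i}$, draw $K$ i.i.d.\ sample signal profiles $S^{(1)}_\theta,\ldots,S^{(K)}_\theta \sim \varphi^*(\theta,\cdot)$ for each state $\theta$, and let the $k$-th deterministic scheme of the mixture map $\theta \mapsto S^{(k)}_\theta$. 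In expectation, this simple scheme attains sender utility exactly $OPT$. By Chernoff--Hoeffding, for $K = \poly(n, |\Theta|, 1/\eps)$, the realized objective and each receiver's realized persuasiveness slack concentrate within additive $\eps$ with high probability. The mild $\eps$-persuasiveness violation can either be carried forward (matching the additive relaxation already acknowledged for the sample-oracle extension) or repaired at a further $O(\eps)$ objective cost by mixing in a strictly persuasive reference scheme.

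For the optimization step, I cast the problem of approximating the best simple scheme as monotone submodular maximization over a down-monotone polytope. Take ground set $E = \Theta \times [K] \times [n]$, and for $T \subseteq E$ with $T_{\theta,k} = \{i : (\theta,k,i) \in T\}$, define
\[
f(T) \;=\; \frac{1}{K} \sum_{\theta,k} \lambda(\theta) \, f_\theta(T_{\theta,k}).
\]
Since each $f_\theta$ is monotone submodular and the $T_{\theta,k}$ live on disjoint coordinates of $E$, $f$ is itself monotone submodular on $E$, and its multilinear extension factors as $F(y) = \frac{1}{K} \sum_{\theta,k} \lambda(\theta) F_\theta(y_{\theta,k,\cdot})$, admitting a Monte-Carlo additive FPTAS evaluation oracle. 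The discrete maximum of $f$ over $T$ satisfying the persuasiveness constraints coincides with the best simple-scheme value, which by the structural lemma is at least $OPT - O(\eps)$. The raw persuasiveness constraint $\sum_{\theta,k} (\lambda(\theta)/K)\, u_i(\theta)\, \mathbf{1}[(\theta,k,i) \in T] \geq 0$ mixes positive and negative coefficients and so does not define a down-closed polytope. To remedy this, I invoke a WLOG reduction permitted by monotonicity of $f_\theta$: writing $\Theta_i^+ := \{\theta : u_i(\theta) \geq 0\}$ and $\Theta_i^- := \Theta \setminus \Theta_i^+$, any optimum can be modified to fix $\mathbf{1}[(\theta,k,i) \in T] = 1$ whenever $\theta \in \Theta_i^+$, since this only weakly increases $f$ and weakly relaxes the persuasiveness constraints. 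After fixing, the $i$-th persuasiveness constraint collapses to a knapsack upper bound $\sum_{\theta \in \Theta_i^-,\, k} (\lambda(\theta)/K) |u_i(\theta)|\, z_{\theta,k,i} \leq \sum_{\theta \in \Theta_i^+} \lambda(\theta) u_i(\theta)$, and the overall feasible set $P'$ is a product of $n$ knapsack polytopes on disjoint variables---down-monotone as required.

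Applying Theorem~\ref{thm:greedy} to $F$ over $P'$ yields a fractional $\bar z \in P'$ with $F(\bar z) \geq (1 - 1/e)\,\max_{z \in P'} F(z) - \eps$. Since continuous optima upper-bound discrete optima over the multilinear extension, $\max_{z \in P'} F(z) \geq OPT - O(\eps)$, and hence $F(\bar z) \geq (1 - 1/e)\, OPT - O(\eps)$. The point $\bar z$ directly defines the required randomized private signaling scheme implementable in $\poly(n, |\Theta|, 1/\eps)$ time: on input $\theta$, the sender samples $k \in [K]$ uniformly and then, independently across receivers, recommends action $1$ to receiver $i$ with probability $\bar z_{\theta,k,i}$ (where $\bar z_{\theta,k,i} = 1$ by the WLOG fixing whenever $\theta \in \Theta_i^+$). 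The expected sender utility of this scheme is exactly $F(\bar z)$, and persuasiveness follows from $\bar z \in P'$.

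I expect the main obstacle to be the composition of small losses from the structural lemma (both in objective and in persuasiveness) with the $(1 - 1/e)$ factor of continuous greedy, and converting the $\eps$-persuasive sampled scheme in the structural lemma into a strictly persuasive one without eroding its objective value by more than $O(\eps)$. A related subtlety is ensuring that the WLOG reduction applied in the optimization step is consistent with applying the same reduction to $\varphi^*$; this consistency holds automatically by monotonicity but must be stated explicitly to conclude that the discrete optimum over $P'$ really does lower-bound $OPT - O(\eps)$. I emphasize that this route \emph{avoids} the correlation-gap argument used in the binary-state case: because the extended ground set allows mixing across the $K$ copies, the discrete maximum of $f$ already matches $OPT$ up to additive $\eps$, sparing us the $(1-1/e)^2$ loss that a direct multilinear relaxation on $\Theta \times [n]$ would incur.
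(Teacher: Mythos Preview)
Your overall architecture matches the paper's almost exactly: a structural lemma showing an $\eps$-optimal $K$-uniform (``simple'') scheme exists for $K=\poly(n,|\Theta|,1/\eps)$, followed by continuous greedy on the multilinear extension over the down-monotone polytope obtained after fixing $x_{\theta,i}=1$ for $\theta\in\Theta_i^+$, and finally the same implementation (sample $k\in[K]$ uniformly, then round independently). Even your remark about avoiding a $(1-1/e)^2$ loss by working over $\Theta\times[K]\times[n]$ is precisely the point of the paper's Program~\eqref{prog:multirelaxed}.

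The one place where your plan diverges from the paper, and where there is a real gap, is the persuasiveness repair in the structural lemma. Your proposed fix---``mixing in a strictly persuasive reference scheme''---does not obviously cost only $O(\eps)$. The natural reference (recommend $1$ iff $\theta\in\Theta_i^+$) has slack $c_i:=\sum_{\theta\in\Theta_i^+}\lambda(\theta)u_i(\theta)$ for receiver $i$; if the sampled scheme violates persuasiveness by an additive $\eps$, the mixing weight $\delta$ must satisfy $\delta\ge \eps/(c_i+\eps)$, which is close to $1$ whenever $c_i\ll\eps$. Since the optimum can have the $i$th constraint tight with $c_i$ arbitrarily small, this can force $\delta\to 1$ and destroy the objective. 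Carrying the $\eps$-persuasiveness forward is also not an option here, since the theorem as stated demands exact persuasiveness (the bicriteria relaxation is only invoked for the sample-oracle extension).

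The paper avoids this by pre-processing $\varphi^*$ \emph{before} sampling, rather than repairing after. Two adjustments are made: (i) for every $(\theta,i)$ with $\theta\in\Theta_i^-$ and $x^*_{\theta,i}<\eps/(3n)$, set the marginal to $0$ (total objective loss $\le \eps/3$); (ii) scale the remaining ``negative-side'' marginals down by the factor $3/(3+\eps)$, shifting the freed mass to the set $I_\theta^+$ (objective loss $\le \eps/3$). After (i) every surviving nonzero $x'_{\theta,i}$ with $\theta\in\Theta_i^-$ is at least $\eps/(4n)$, so a \emph{multiplicative} Chernoff bound applies termwise: with $K=\Theta\!\bigl(n\log(n|\Theta|)/\eps^3\bigr)$ samples, each empirical $\bar{x}_{\theta,i}\le(1+\eps/3)x'_{\theta,i}=x^*_{\theta,i}$. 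This pointwise domination (not an additive bound on the aggregate slack) immediately yields strict feasibility of the sampled scheme for the original constraints, independent of the scale of the $u_i(\theta)$'s. If you replace your ``mix after'' step with this ``shrink before'' step, the rest of your plan goes through verbatim and coincides with the paper's proof.
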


The main technical challenge in proving Theorem \ref{thm:multi}  is that a private signaling scheme may have exponentially large support, as apparent from linear program \eqref{lp:opt0}. To overcome this difficulty, we prove 
a structural characterization of (approximately) optimal persuasive private schemes, i.e., solutions to LP \eqref{lp:opt0}.  Roughly speaking, we show that LP \eqref{lp:opt0} always has an approximately optimal solution with polynomial-sized support and nicely structured distributions.  This greatly narrows down the solution space we need to search over.    
Recall that for any $\theta$, $\varphi(\theta)$ is a random variable supported on $2^{[n]}$. We say $\varphi(\theta)$ is  \emph{$K$-uniform} if it follows a uniform distribution on a multiset of size $K$. The following lemma exhibits a structural property regarding (approximately) optimal solutions to LP \eqref{lp:opt0}. Notably, this property only depends on monotonicity of the sender's objective functions and does not depend on submodularity. Its proof is postponed to the end of this section. 

\begin{lemma}\label{lem:existence}
	Let $f_{\theta}$  be monotone for each $\theta$. For any $\epsilon >0$, there exists an $\epsilon$-optimal persuasive private signaling scheme $\bar{\varphi}$ such that  $\bar{\varphi}(\theta)$ is $K$-uniform for every $\theta$, where $K = \frac{108n\log (2n|\Theta|)}{\epsilon^3}$.
\end{lemma}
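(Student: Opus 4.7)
The plan is to use the probabilistic method based on i.i.d.\ sampling from a slightly perturbed optimal scheme. Starting from any exactly optimal persuasive private scheme $\varphi^*$, the naive approach of drawing $K$ i.i.d.\ sample subsets from $\varphi^*(\theta)$ for each $\theta$ and forming the uniform distribution on them preserves the sender's utility in expectation, but can violate persuasiveness: whenever a receiver's constraint is tight in $\varphi^*$, the sampling noise has a constant probability of pushing the LHS negative. The remedy is to first construct a nearby scheme $\tilde\varphi$ that is still $\epsilon/2$-optimal but has strictly positive slack $\gamma$ in every persuasiveness constraint, and then sample from $\tilde\varphi$ so that the $O(\sqrt{\log(n|\Theta|)/K})$ sampling error is absorbed by the slack.

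To construct $\tilde\varphi$, I would take a convex combination of $\varphi^*$ with weight $1-\eta$ and a ``slack-creating'' scheme $\varphi^\circ$ with weight $\eta$, where for each $\theta$ the set $\varphi^\circ(\theta)$ consists of exactly the receivers $i$ with $u_i(\theta)\geq 0$. The contribution of $\varphi^\circ$ to the LHS of receiver $i$'s persuasiveness constraint is $\sum_{\theta:u_i(\theta)\geq 0}\lambda(\theta)u_i(\theta)\geq 0$, which is strictly positive unless $u_i(\theta)\leq 0$ for every $\theta$ in the support of $\lambda$; in the latter degenerate case $\varphi^*$ must recommend action $1$ to $i$ with probability zero and the constraint is automatically preserved under any modification.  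Using $f_\theta\in[0,1]$ together with monotonicity, the utility cost of the mixing is at most $\eta$ and, after handling the degenerate receivers separately, the slack created is at least $\eta\gamma_0$ for some $\gamma_0>0$ depending on the instance. To make the slack uniform, I would additionally scale the weights so as to equalize slacks across receivers, using that the sender's utility can only increase under further monotone modifications of recommendations on states where persuasiveness is easy. I would then define $\bar\varphi(\theta)$ to be the uniform distribution over $K$ i.i.d.\ samples from $\tilde\varphi(\theta)$, drawn independently across $\theta\in\Theta$.

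The analysis is then two applications of Hoeffding's inequality plus a union bound. The sender's utility $\tfrac{1}{K}\sum_{\theta,k}\lambda(\theta)f_\theta(S_k^\theta)$ and each persuasiveness LHS $\tfrac{1}{K}\sum_{\theta,k}\lambda(\theta)u_i(\theta)\mathbf{1}[i\in S_k^\theta]$ are sums of $|\Theta|K$ independent bounded random variables, with sum-of-squared-ranges $O(1/K)$ using $\sum_\theta\lambda(\theta)^2\leq 1$. Hoeffding gives deviation $O(\sqrt{\log(n|\Theta|/\delta)/K})$ simultaneously for sender utility and for all $n$ persuasiveness constraints with probability at least $1-\delta$. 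Choosing $\eta$ so that both the utility loss $O(\eta)$ from the mixing and the sampling error are each $O(\epsilon)$, while the sampling deviation in each persuasiveness LHS is smaller than the slack $\gamma=\Omega(\eta)$, and carefully tracking the per-receiver slack that monotonicity affords, yields the stated $K=108n\log(2n|\Theta|)/\epsilon^3$; by the probabilistic method a good realization of the $K|\Theta|$ samples exists. The main obstacle is precisely this first step: producing simultaneous uniform slack across all $n$ persuasiveness constraints while paying only $O(\epsilon)$ in sender utility, which requires separating out receivers according to the sign of $\sum_\theta\lambda(\theta)u_i(\theta)$ and exploiting monotonicity of $f_\theta$ (the only structural assumption we have on the sender's objective) to bound the utility cost of the slack-creating perturbation. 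The appearance of the extra $n/\epsilon$ factor relative to the elementary $\log(n|\Theta|)/\epsilon^2$ bound from pure sampling traces back to this slack-versus-cost tradeoff.
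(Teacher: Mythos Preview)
Your proposal has a genuine gap in the slack-creation step that prevents it from yielding the stated instance-independent bound on $K$. When you mix $\varphi^*$ with the scheme $\varphi^\circ$ that recommends action~$1$ exactly to receivers $i$ with $u_i(\theta)\geq 0$, the slack you create in receiver $i$'s constraint is $\eta\sum_{\theta\in\Theta_i^+}\lambda(\theta)u_i(\theta)$, which can be arbitrarily small depending on the instance (take $u_i(\theta)$ tiny on $\Theta_i^+$). Your suggestion to ``equalize slacks'' does not explain how to lower-bound the smallest slack by something polynomial in $\epsilon$ and $1/n$ alone; without that, $K$ must grow with the instance. A second, related issue is that your Hoeffding bound on the persuasiveness LHS $\tfrac{1}{K}\sum_{\theta,k}\lambda(\theta)u_i(\theta)\mathbf{1}[i\in S_k^\theta]$ has sum-of-squared-ranges $\tfrac{1}{K}\sum_\theta\lambda(\theta)^2u_i(\theta)^2$, which requires a bound on $|u_i(\theta)|$ that the lemma does not assume.

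The paper avoids both problems by \emph{not} controlling the aggregate persuasiveness LHS at all. Instead it controls each individual marginal $x_{\theta,i}$ for $\theta\in\Theta_i^-$ and ensures $\bar{x}_{\theta,i}\leq x^*_{\theta,i}$ term by term; since $\bar{x}_{\theta,i}=1=x^*_{\theta,i}$ is preserved deterministically for $\theta\in\Theta_i^+$, this immediately yields persuasiveness regardless of the magnitude of the $u_i$'s. Concretely, the paper (i) rounds every marginal below $\epsilon/(3n)$ to zero (cost $\leq\epsilon/3$ by monotonicity and $f_\theta\leq 1$), (ii) scales the remaining negative-state marginals down by a factor $3/(3+\epsilon)$ by shifting probability mass onto the set $I_\theta^+$ (cost $\leq\epsilon/3$), and then (iii) applies the \emph{multiplicative} Chernoff bound to each surviving marginal, which now satisfies $x'_{\theta,i}\geq\epsilon/(4n)$, to get $\bar{x}_{\theta,i}\leq(1+\epsilon/3)x'_{\theta,i}=x^*_{\theta,i}$ with failure probability at most $1/(2n|\Theta|)$ when $K=108n\log(2n|\Theta|)/\epsilon^3$. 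The factor $n/\epsilon$ you noted indeed appears, but it comes from the lower bound $\epsilon/(4n)$ on the nonzero marginals that makes Chernoff effective, not from a slack-versus-cost tradeoff in the aggregate constraint.
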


By Lemma \ref{lem:existence}, we can, without much loss,  restrict our design of $\varphi(\theta)$ to the special class of $K$-uniform distributions. Note that a  $K$-uniform distribution $\varphi(\theta)$ can be described by variables $x_{\theta,i}^j \in \{0,1\}$ for $i \in [n], j\in [K]$, where $x_{\theta,i}^j$ denotes the recommended action to receiver $i$ in the $j$'th profile in the support of $\varphi(\theta)$. Relaxing our variables to lie in $[0,1]$, this leads to optimization problem \eqref{prog:multirelaxed}, where $F_{\theta} (x) = \sum_{S\subseteq [n]} f_{\theta}(S) \prod_{i \in S} x_{i} \prod_{i \not \in S} (1-x_{i})$  is the multi-linear extension of $f_{\theta}$.


\begin{lp}\label{prog:multirelaxed}
	\maxi{\sum_{\theta \in \Theta} \frac{\lambda(\theta)}{K} \sum_{j=1}^K F_{\theta}(x_{\theta}^{j})}
	\st 
	\qcon{\sum_{\theta \in \Theta} \frac{\lambda(\theta)}{K} \sum_{j=1}^K  x_{\theta,i}^j u_i(\theta) \geq 0}{i = 1,...,n}
	\qcon{0 \leq x_{\theta,i}^j \leq 1}{i=1,...,n;\theta \in \Theta}
\end{lp} 

As a high level, our algorithm  first approximately solves Program \eqref{prog:multirelaxed} and then signals according to its solution. Details are in Algorithm \ref{alg:multi}, which we instantiate with $\epsilon > 0$ and  $K = \frac{108n\log (2n|\Theta|)}{\epsilon^3}$.  Since $F_{\theta}(x) = \Ex_{S \sim p_x^I} f(S)$ where $p^I_x$ is the independent distribution over $2^{[n]}$ with marginal probability $x$, the expected sender utility induced by the signaling scheme in Algorithm \ref{alg:multi} is  precisely the objective value of Program \eqref{prog:multirelaxed} at the obtained solution. Theorem \ref{thm:multi} then follows from two claims:  1. The optimal objective value of Program \eqref{prog:multirelaxed} is $\epsilon$-close to the optimal sender utility (Claim \ref{claim:approxOPT} ); 2. The continuous greedy process \cite{calinescu2011} can be applied to Program \eqref{prog:multirelaxed} to efficiently compute a $(1-1/e)$-approximate solution, modulo a small additive loss (Claim \ref{claim:multi}). We remark that  Theorem \ref{thm:multi} can be generalized to the sample oracle model, but with an additional $\eps$-loss in persuasiveness constraints (assuming $u_{i}(\theta) \in [-1,1]$), using the idea from \cite{Dughmi2016}.

\begin{algorithm}
	\begin{algorithmic}[1]
		\PARAMETER  $\eps > 0$
		\INPUT Prior distribution $\lambda$ supported  on $\Theta$
		\INPUT $u_i(\theta)$'s and value oracle access to the sender utility $f_{\theta}(S)$
		\INPUT State of nature $\theta$ 
		\OUTPUT A set $S \subseteq [n]$ indicating the set of receivers who will be recommended action $1$.
		
		\STATE Approximately solve Program \eqref{prog:multirelaxed}. 
		Let $\{ \tilde{x}_{\theta,i}^j \}_{\theta \in \Theta, i \in [n], j\in[K]}$ be the returned solution.  
		
		\STATE Choose $j$ from $[K]$ uniformly at random; For each receiver $i$, add $i$ to $S$ independently with probability $\tilde{x}_{\theta,i}^j$.
		\STATE Return $S$.
	\end{algorithmic}
	\caption{Private Signaling Scheme for Submodular Sender Objectives}
	\label{alg:multi}
\end{algorithm}

\begin{claim}\label{claim:approxOPT}
	When $K = \frac{108n\log (2n|\Theta|)}{\epsilon^3}$, the optimal objective value of Program \eqref{prog:multirelaxed} is at least $OPT - \epsilon$, where $OPT$ is the optimal sender utility in private signaling. 
\end{claim}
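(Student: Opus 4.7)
The plan is to derive this claim as essentially a direct consequence of Lemma \ref{lem:existence}, by exhibiting a feasible solution to Program \eqref{prog:multirelaxed} whose objective value equals the expected sender utility of the $K$-uniform scheme guaranteed by the lemma. The main work is simply to verify that a $K$-uniform persuasive scheme, reinterpreted through $\{0,1\}$-valued variables, satisfies the constraints of Program \eqref{prog:multirelaxed} with matching objective.

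First, I would invoke Lemma \ref{lem:existence} with the given $\epsilon$ and $K = \frac{108 n \log(2n|\Theta|)}{\epsilon^3}$ to obtain an $\epsilon$-optimal persuasive private signaling scheme $\bar{\varphi}$ such that $\bar{\varphi}(\theta)$ is $K$-uniform for every $\theta$. By definition, for each $\theta$ we may write $\bar{\varphi}(\theta)$ as the uniform distribution over a multiset $\{ S_\theta^1, \ldots, S_\theta^K \}$ of subsets of $[n]$. I would then define, for $i \in [n]$, $\theta \in \Theta$, $j \in [K]$, the variables $\tilde{x}_{\theta,i}^j = \mathbf{1}[i \in S_\theta^j] \in \{0,1\}$, which are clearly feasible for the box constraint of Program \eqref{prog:multirelaxed}.

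Next, I would verify the persuasiveness constraints. Since $\bar{\varphi}$ is persuasive, for each $i$ we have $\sum_{\theta} \lambda(\theta)\, \bar{x}_{\theta,i}\, u_i(\theta) \geq 0$, where $\bar{x}_{\theta,i} = \Pr[i \in \bar{\varphi}(\theta)] = \frac{1}{K} \sum_{j=1}^{K} \tilde{x}_{\theta,i}^j$. Substituting yields exactly the persuasiveness constraint
\[
\sum_{\theta \in \Theta} \frac{\lambda(\theta)}{K} \sum_{j=1}^{K} \tilde{x}_{\theta,i}^j u_i(\theta) \geq 0
\]
required by Program \eqref{prog:multirelaxed}. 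Then, since each $\tilde{x}_\theta^j$ is the indicator vector of $S_\theta^j$, the multilinear extension collapses to $F_\theta(\tilde{x}_\theta^j) = f_\theta(S_\theta^j)$, and the objective value of $\tilde{x}$ in Program \eqref{prog:multirelaxed} becomes
\[
\sum_{\theta \in \Theta} \frac{\lambda(\theta)}{K} \sum_{j=1}^{K} f_\theta(S_\theta^j) = \sum_{\theta \in \Theta} \lambda(\theta) \, \Ex_{S \sim \bar{\varphi}(\theta)} f_\theta(S),
\]
which is precisely the expected sender utility of $\bar{\varphi}$.

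Finally, I would combine these observations: $\tilde{x}$ is a feasible solution to Program \eqref{prog:multirelaxed} whose objective value equals the sender utility of $\bar{\varphi}$, which by Lemma \ref{lem:existence} is at least $OPT - \epsilon$. Hence the optimal value of Program \eqref{prog:multirelaxed} is at least $OPT - \epsilon$, as claimed. The only substantive content here is Lemma \ref{lem:existence}; the translation into the multilinear-relaxation program is essentially bookkeeping, since $K$-uniform schemes correspond exactly to $\{0,1\}$-valued solutions of Program \eqref{prog:multirelaxed}, along which $F_\theta$ reduces to $f_\theta$.
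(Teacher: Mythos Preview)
Your proposal is correct and follows essentially the same approach as the paper: invoke Lemma~\ref{lem:existence} to obtain the $K$-uniform scheme $\bar{\varphi}$, encode its support via indicator vectors $\tilde{x}_{\theta}^j \in \{0,1\}^n$, and observe that feasibility in LP~\eqref{lp:opt0} translates directly to feasibility in Program~\eqref{prog:multirelaxed} while $F_\theta$ collapses to $f_\theta$ on integral points, so the objective values coincide. The paper's argument is the same bookkeeping, stated slightly more tersely.
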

\begin{proof}
	By Lemma \ref{lem:existence}, there exists a private signaling scheme  $\bar{\varphi}$ such that: (i) $\bar{\varphi}$ achieves sender utility at least $OPT -\epsilon$; (ii) for each $\theta$, there exists $K$ sets $S_{\theta}^1,...,S_{\theta}^K \subseteq [n]$ such that $\bar{\varphi}_{\theta}$ is a uniform distribution over $\{S_{\theta}^1,...,S_{\theta}^K \}$. 
	Utilizing $\bar{\varphi}$,  we can construct a feasible solution $\bar{x}$ to Program \eqref{prog:multirelaxed} with objective value at least $OPT- \eps$.  In particular, let $\bar{x}_{\theta}^j \in \{0,1 \}^n$ be the indicator vector of the set $S_{\theta}^j$, formally defined as follows: $\bar{x}_{\theta,i}^j = 1$ if and only if $i \in S_{\theta}^j$. By referring to the feasibility of $\bar{\varphi}$ to LP \eqref{lp:opt0},  it is easy to check that $\bar{x}_{\theta,i}^j$'s are feasible to Program \eqref{prog:multirelaxed}. Moreover, since $F_{\theta}(\bar{x}_{\theta}^j) = f_{\theta}(S_{\theta}^j)$,  the objective value of Program \eqref{prog:multirelaxed} at the solution $\bar{x}$ equals the objective value of Program \eqref{lp:opt0} at the solution $\bar{\varphi}$, which is at least $OPT - \epsilon$. Therefore, the optimal objective value of Program  \eqref{prog:multirelaxed} is at least $OPT-\epsilon$, as desired.
\end{proof}

\begin{claim}\label{claim:multi}
	There is an algorithm that runs in $\poly(n,|\Theta|,K,\frac{1}{\eps})$ time  and computes a $(1-1/e)$-approximate solution, modulo an  additive loss of $\epsilon/e$,  to Program \eqref{prog:multirelaxed}. 
\end{claim}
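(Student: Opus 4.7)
The plan is to recognize Program~\eqref{prog:multirelaxed} as a monotone smooth submodular maximization over a polytope that, after a without-loss simplification, becomes down-monotone, so that Theorem~\ref{thm:greedy} (the continuous greedy process of \cite{calinescu2011}) applies directly. Write $\tilde{F}$ for the objective of Program~\eqref{prog:multirelaxed} viewed as a function of the concatenated vector $(x_{\theta,i}^j)_{\theta \in \Theta,\, i \in [n],\, j \in [K]}$. The $K|\Theta|$ blocks $x_\theta^j = (x_{\theta,1}^j,\ldots,x_{\theta,n}^j)$ are on disjoint variables, so mixed second partial derivatives across blocks vanish, while within each block they are non-positive by smooth submodularity of the multilinear extension $F_\theta$; similarly $\tilde{F}$ is monotone non-decreasing, nonnegative, and takes values in $[0,1]$. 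Moreover $\tilde{F}$ admits an additive FPTAS evaluation oracle: each $F_\theta(x_\theta^j) = \Ex_{S\sim p^I_{x_\theta^j}}[f_\theta(S)]$ can be estimated to additive accuracy $\delta$ in $\poly(n,1/\delta)$ time by averaging $f_\theta$ over independent samples from $p^I_{x_\theta^j}$ via a Chernoff bound, and $\tilde{F}$ is a convex combination of $O(K|\Theta|)$ such evaluations.

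The main obstacle is that the persuasiveness constraints $\sum_{\theta,j} \lambda(\theta) u_i(\theta) x_{\theta,i}^j \geq 0$ are lower-bound inequalities, so the feasible region of Program~\eqref{prog:multirelaxed} is \emph{not} down-monotone in its stated form. The key observation that overcomes this is that the $i$-th persuasiveness constraint involves only the variables $\{x_{\theta,i}^j\}_{\theta,j}$, so constraints for different receivers act on disjoint variables. My simplification is to fix $x_{\theta,i}^j = 1$ for every triple $(i,\theta,j)$ with $u_i(\theta) \geq 0$; this is without loss because (i) monotonicity of $\tilde{F}$ only weakly improves the objective, and (ii) the left-hand side of receiver $i$'s constraint only weakly increases, so feasibility is preserved. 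After this substitution, letting $N_i = \{\theta : u_i(\theta) < 0\}$, the $i$-th constraint rewrites as the knapsack-type upper bound
\[
\sum_{\theta \in N_i,\; j \in [K]} \lambda(\theta)\,|u_i(\theta)|\,x_{\theta,i}^j \;\leq\; K \sum_{\theta:\, u_i(\theta) \geq 0} \lambda(\theta)\,u_i(\theta)
\]
on the remaining variables, and the overall feasible region $\P$ becomes a product over $i$ of these knapsack polytopes intersected with $[0,1]^{|N_i| K}$, which is down-monotone and has a polynomial-size explicit description. Fixing variables to $1$ in $\tilde{F}$ yields the multilinear extension of a contracted submodular function, so the resulting restricted objective retains all the required properties (monotone, nonnegative, smooth submodular, $[0,1]$-valued, with an additive FPTAS oracle).

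With these pieces in place, I would invoke Theorem~\ref{thm:greedy} on $\tilde{F}$ over $\P$ with error parameter $\epsilon/e$. Since the dimension is $O(nK|\Theta|)$, this yields in $\poly(n,|\Theta|,K,1/\epsilon)$ time a feasible $\bar{x} \in \P$ with $\tilde{F}(\bar{x}) \geq (1-1/e)\,\mathrm{OPT}' - \epsilon/e$, where $\mathrm{OPT}'$ is the optimum of the simplified program; since the ``fix to $1$'' step preserves the optimum, $\mathrm{OPT}'$ equals the optimum of Program~\eqref{prog:multirelaxed}, proving the claim (after reattaching the fixed coordinates to $\bar{x}$). The only non-routine step is locating the monotonicity-based ``fix to $1$'' simplification that converts the persuasiveness lower bounds into down-monotone knapsack constraints; the remaining ingredients are direct applications of the continuous greedy framework of \cite{calinescu2011} together with standard Monte Carlo evaluation of a multilinear extension.
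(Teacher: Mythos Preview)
Your proposal is correct and follows essentially the same approach as the paper: both recognize the objective as a nonnegative convex combination of multilinear extensions (hence monotone smooth submodular with an FPTAS evaluation oracle), and both overcome the non--down-monotonicity of the persuasiveness constraints by the same ``fix $x_{\theta,i}^j=1$ whenever $u_i(\theta)\geq 0$'' step, which turns each constraint into an upper-bound knapsack on the remaining variables so that Theorem~\ref{thm:greedy} applies. Your write-up is in fact more explicit than the paper's on the knapsack form of the resulting polytope and on why the restricted objective remains a multilinear extension, but the underlying argument is the same.
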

\begin{proof}
	The objective function of Program \eqref{prog:multirelaxed} is a linear combination, with non-negative coefficients,  of multilinear extensions of monotone submodular functions, thus is smooth, monotone and submodular. Moreover, the function value can be evaluated within error $\eps$ by $\poly(n,\frac{1}{\eps})$ random samples, thus in $\poly(n,\frac{1}{\eps})$  time. To apply Theorem \ref{thm:greedy}, we only need to prove that the feasible region is a down-monotone polytope. Observe that there always exists an optimal solution to Program \eqref{prog:multirelaxed} such that $x_{\theta,i} = 1$ for any $\theta,i$ such that $u_i(\theta) \geq 0$. Therefore, w.l.o.g., we can pre-set these variables to be $1$ and view the program as an optimization problem over $x_{\theta,i}$'s for all $\theta,i$ such that $u_i(\theta)<0$. It is easy to check that these $x_{\theta,i}$'s form a down-monotone polytope determined by polynomially many constraints, as desired. 
\end{proof}


\subsection*{Proof of  Lemma \ref{lem:existence}}
Our proof is based on the probabilistic method. Recall that the optimal private signaling scheme can be computed by solving  the exponentially-large LP \eqref{lp:opt0}.  Roughly speaking, given any optimal private scheme $\varphi^*$, we will take polynomially many samples from $\varphi^*(\theta)$ for each $\theta$, and prove that with strictly positive probability the corresponding empirical distributions form a solution to LP \eqref{lp:opt0} that is close to optimality. However, the sampling approach usually suffers from $\epsilon$-loss in both the objective and persuasiveness constraints. It turns out that  the $\epsilon$-loss to persuasiveness constraints can be avoided in our setting with carefully designed pre-processing steps. 

At a high level, to get rid of the $\epsilon$-loss in persuasiveness constraints, there are two main technical barriers. The first is to handle the estimation error in the receiver's utilities, which is inevitable due to sampling. We address this by adjusting the $\varphi^*$ to strengthen the persuasiveness constraints so that a small estimation error would still preserve the original persuasiveness constraints. The second barrier arises when some $x^*_{\theta,i}$'s are smaller than inverse polynomial of the precision $\epsilon$, then $poly(\frac{1}{\epsilon})$ samples cannot guarantee a good multiplicative estimate of  $x^*_{\theta,i}$. We deal with this issue by making ``honest" recommendation, i.e.,  action $0$, at these cases, and show that such modification will not cause much loss to our objective.

We first introduce some convenient notations. For any receiver $i$, let set $\Theta^+_i = \{ \theta: u_{i}(\theta) \geq 0 \}$ be the set of states at which receiver $i$ (weakly) prefers action $1$; Similarly, $\Theta^{-}_i = \{ \theta: u_i(\theta) < 0 \}$ be the set of states at which receiver $i$ prefers action $0$.   Moreover, for any state of nature $\theta$, let $I^+_{\theta} = \{i: u_{i}(\theta) \geq 0 \}$ be the set of receivers who (weakly) prefer action $1$ at state $\theta$.  It would be convenient to think of $\{\Theta^+_i \}_{i \in [n]}$ and $\{ I^+_{\theta}\}_{\theta \in \Theta}$  as two different partitions of the set $\{(\theta,i): u_{i}(\theta) \geq 0 \}$. 

Observe that by monotonicity there always exists an optimal signaling scheme $\varphi^*$ such that $x^*_{\theta,i} = 1$ for every $\theta \in \Theta_i^+$. Let $\varphi^*$ be such an optimal signaling scheme and $OPT$ denote the optimal sender utility. We now adjust the scheme $\varphi^*$ such that they do not degrade the objective value by much but is more suitable for applying concentration bounds for our probabilistic argument.

\vspace{3mm}
\noindent { \bf Adjustment 1: Always Recommend Action $0$ When $x^*_{\theta,i} < \frac{\epsilon}{3n}$}
\vspace{1mm}

Note that $x^*_{\theta,i} < \frac{\epsilon}{3n}$ only when $\theta \in \Theta_i^-$, i.e., action $0$ is the best action for receiver $i$ conditioned on $\theta$. We first adjust $\varphi^*$ to obtain a new scheme $\tilde{\varphi}$, as follows: $\tilde{\varphi}$ is the same as $\varphi^*$ except that for every $\theta,i$ such that $x^*_{\theta,i} < \frac{\epsilon}{3n}$, $\tilde{\varphi}$ always recommends action $0$ to receiver $i$ given the state of nature $\theta$. As a result, $\tilde{x}_{\theta,i}$ equals  $x^*_{\theta,i}$ whenever  $x^*_{\theta,i} \geq \frac{\epsilon}{3n}$ and equals $0$ otherwise. Note that the signaling scheme still satisfies the persuasiveness constraints. 

Naturally, each adjustment above, corresponding to $\theta,i$ satisfying $x^*_{\theta,i} < \frac{\epsilon}{3n}$, could decrease the objective value since the marginal probability of recommending action $1$ decreases. Nevertheless, this loss, denoted as $L(\theta,i)$, can be properly bounded as follows:  
\begin{eqnarray*}
	L(\theta,i) 
	&=& \lambda(\theta) \cdot \bigg[ \sum_{S: i \in S} \varphi^*(\theta,S) f_{\theta}(S)- \sum_{S: i \in S}  \varphi^*(\theta,S) f_{\theta}(S\setminus\{i\}) \bigg] \\
	& \leq & \lambda(\theta) \cdot \bigg[ \sum_{S: i \in S} \varphi^*(\theta,S) \bigg] \\
	& = & \lambda(\theta)  x_{\theta,i}^* \leq \frac{\lambda(\theta) \epsilon }{3n}.\\
\end{eqnarray*}

As a result, the aggregated loss of all the adjustments made in this step can be upper bounded by $\sum_{\theta \in \Theta}\sum_{i=1}^{n} \frac{\lambda(\theta) \epsilon }{3n} = \frac{\eps}{3}$. That is, the objective value of $\tilde{\varphi}$ is at least $OPT - \frac{\eps}{3}$. 

\vspace{3mm}
\noindent { \bf Adjustment 2: Strengthen the Persuasive Constraints by Scaling Down $x_{\theta,i}$'s} 
\vspace{1mm}

We now strengthen the persuasiveness constraints by further adjusting the $\tilde{\varphi}$ obtained from above so that a small estimation error due to sampling will still maintain the original persuasiveness constraints. For any $\theta$, we define $\varphi' (\theta,S) = \frac{3}{3+\epsilon}\tilde{\varphi}(\theta,S) $ for all $S \not = I^+_{\theta} $, and define $\varphi'(\theta,I^+_{\theta})= 1-\sum_{S \not = I^+_{\theta}} \varphi'(\theta,S)$. Obviously, $\varphi'_{\theta}$ is still a distribution over $2^{[n]}$. Moreover, we claim that $x'_{\theta,i}=\Ex_{S \sim \varphi'_{\theta}} \mathbb{I}(i \in S)  = 1$ whenever $\tilde{x}_{\theta,i} = 1$, i.e., $  \theta \in \Theta_i^+$. That is, given state $\theta$, any receiver $i \in I_{\theta}^+$ will still aways be recommended action $1$.  This is because, to construct $\varphi'_{\theta}$, we moved some probability mass from all other sets $S$ to the set $I_{\theta}^+$, therefore the marginal probability of recommending action $1$ to any receiver $i \in I_{\theta}^+$ will not decrease. However, this marginal probability is originally $1$ in the solution of $\tilde{\varphi}$. Therefore, $x'_{\theta,i} $ still equals $1$ for any $i \in I_{\theta}^+$, or equivalently, for any $\theta \in \Theta_i^{+}$.  Similarly, we also have  $x'_{\theta,i}=0$ whenever $\tilde{x}_{\theta,i} = 0$.

Let $Val(\varphi)$ denote the objective value of a  scheme $\varphi$.  We claim that $Val(\varphi') \geq OPT - \frac{2\eps}{3}$ and $\varphi'$ satisfies $x'_{\theta,i} = \frac{3}{3+\epsilon} \tilde{x}_{\theta,i}$ for every $\theta \in \Theta_i^-$.   For any $i \in [n], \theta \in \Theta_i^-$ (which means $i \not \in I_{\theta}^+$), we have
\begin{equation*}
	x'_{\theta,i} = \sum_{S: i \in S} \varphi'(\theta,S) = \frac{3}{3+\epsilon} \sum_{S: i \in S} \tilde{\varphi} (\theta,S) = \frac{3}{3+\epsilon} \tilde{x}_{\theta,i},
\end{equation*}
since the summation excludes the term $\varphi'(\theta,I_{\theta}^+)$. We now prove the guarantee of the objective value. Observe that $\varphi'(\theta,I^+_{\theta}) \geq \frac{3}{3+\epsilon} \tilde{\varphi}(\theta,I^+_{\theta})$ also holds in our construction.  Therefore, we have 
\begin{eqnarray*}
	Val(\varphi') & = & \sum_{\theta \in \Theta} \lambda(\theta) \sum_{S\subseteq [n]} \varphi'(\theta,S)  f_{\theta}(S) \\
	&\geq & \frac{3}{3+\epsilon} \sum_{\theta \in \Theta} \lambda(\theta) \sum_{S\subseteq [n]} \tilde{\varphi}(\theta,S)  f_{\theta}(S) \\
	&= & \frac{3}{3+\epsilon} \cdot Val(\tilde{\varphi}) \\
	& \geq & OPT - \frac{2\eps}{3},
\end{eqnarray*}
where we used the  upper bound $Val(\tilde{\varphi}) \leq 1$.

\vspace{3mm}
\noindent   {\bf Existence of An $\epsilon$-Optimal Solution of Small Support.} 
\vspace{1mm}

The above two steps of adjustment result in a feasible $\frac{2\eps}{3}$-optimal solution $\varphi'$ to LP \eqref{lp:opt0} that satisfies the following properties: (i) $x'_{\theta,i} = x^*_{\theta,i} = 1$ whenever $u_i(\theta) \geq 0$; (ii) $x'_{\theta,i} = \frac{3}{3+\epsilon} \tilde{x}_{\theta,i} = \frac{3}{3+\epsilon} x^*_{\theta,i} \geq \frac{\eps}{4n}$ when $x^*_{\theta,i} \geq \frac{\eps}{3n}$ and $\theta \in \Theta_i^-$; (iii) $x'_{\theta,i} =0$ when $x^*_{\theta,i} < \frac{\eps}{3n}$ and $\theta \in \Theta_i^-$.    Utilizing such a $\varphi'$ we show that there exists an $\eps$-optimal solution $\bar{\varphi}$ to LP \eqref{lp:opt0} such that the distribution $\bar{\varphi}_{\theta}$ is a $K$-uniform distribution for every $\theta$, where  $K = \frac{108n\log (2n|\Theta|)}{\epsilon^3}$. 

Our proof  is based on the probabilistic method. For each $\theta$, independently take $K = \frac{108n\log (2n|\Theta|)}{\epsilon^3}$ samples from random variable $\varphi'(\theta)$, and let $\bar{\varphi}_{\theta}$ denote the corresponding empirical distribution. Obviously, $\bar{\varphi}_{\theta}$ is a $K$-uniform distribution. We claim that with strictly positive probability over the randomness of the samples,  $\bar{\varphi}$ is feasible to LP \eqref{lp:opt0} and achieves utility at least $Val(\varphi')-\frac{\epsilon}{3} \geq OPT - \eps$.

We first examine the objective value.  Observe that the objective value $Val(\varphi')$ can be viewed as the expectation of the random variable $\sum_{\theta \in \Theta} \lambda(\theta)f_{\theta}(S_{\theta}) \in [0,1]$,  where $S_{\theta}$ follows the distribution of  $\varphi'(\theta)$. Our sampling procedure generates  $K$ samples for the random variable $\{S_{\theta}\}_{\theta \in \Theta}$, therefore by the Hoeffding bound, with probability at least $1 -  \exp(-2K\epsilon^2/9) > 1- 1/(2n|\Theta|) $, the empirical mean is at least $Val(\varphi')-\epsilon/3$. 

Now we only need to show that all the persuasiveness constraints are preserved with high probability. First, observe that if $x'_{\theta,i} = 0$, then $\bar{x}_{\theta,i}$ induced by $\bar{\varphi}$ also equals  $0$. This is because $x'_{\theta,i} = \Ex_{S \sim \varphi'(\theta)}\mathbb{I}(i \in S) = 0$ implies that $i$ is not contained in any $S$ from the support of $\varphi'(\theta)$, therefore, also not contained in any sample. Similarly,  $x'_{\theta,i} = 1$ implies $\bar{x}_{\theta,i} = 1$. To show that all the persuasiveness constraints hold, we only need to argue that $\bar{x}_{\theta,i} \leq x^*_{\theta,i}$ for every $\theta \in \Theta_i^-$ satisfying $x^*_{\theta,i} \geq \frac{\eps}{3n}$. This holds with high probability by tail bounds. In particular,  $x'_{\theta,i} = \Ex_{S \sim \varphi'(\theta)}\mathbb{I}(i \in S) $ and we take $K$ samples from $\varphi'(\theta)$. By the Chernoff bound, with probability at least 
$$1- \exp(-\frac{K\epsilon^2 x'_{\theta,i}}{27}) \geq 1- \exp(-\frac{K\epsilon^3}{108n}) > 1- \frac{1}{2n|\Theta|},$$ the empirical mean $\bar{x}_{\theta,i}$ is at most $(1+\epsilon/3) x'_{\theta,i} = x^*_{\theta,i}$.

Note that there are at most $n|\Theta|$ choices of such $\theta,i$. By union bound, with probability  at least $1- (n|\Theta|+1)/(2n|\Theta|) > 0$, $\bar{\varphi}$  satisfies all the persuasiveness constraints thus is feasible to LP \eqref{lp:opt0}, and achieves objective value at least $Val(\varphi') - \frac{\epsilon}{3} \geq OPT - \eps$. So there must exist a feasible $\eps$-optimal solution $\bar{\varphi}$ to LP \eqref{lp:opt0} such that $\bar{\varphi}_{\theta}$ is $K$-uniform for every $\theta$.  This concludes our proof of Lemma \ref{lem:existence}.

\section{An Oblivious Private Scheme for Binary-State Settings}\label{sec:binary}
In this section, we consider the special case with two states of nature, denoted by $\theta_0$ and $\theta_1$, and submodular sender utility functions. 
We show that a $(1-\frac{1}{e})$-approximate private signaling scheme can be  \emph{explicitly} constructed. The approximation ratio is tight by \cite{babichenko2016}. Moreover, the constructed signaling scheme has the following distinctive properties:  (i) it signals independently
to each receiver, which we term an \emph{independent signaling scheme};
(ii) it is \emph{oblivious} in the sense that it does not depend on the sender's utility function so long as it is monotone submodular.  

In particular, we consider the following \emph{independent signaling scheme} $\varphi_{I}$.   Given any state of nature $\theta \in \{\theta_0, \theta_1 \}$,  $\varphi_{I}$  recommends action $1$ to receiver $i$ independently with probability $x_{\theta,i}$ for each $i \in [n]$ and recommends action $0$ otherwise, where $x_{\theta,i}$'s are defined as follows:
\begin{equation} \label{eq:binaryScheme}
\begin{array}{ll}
\mbox{If $u_i(\theta_0) < 0$ and $u_i(\theta_1) < 0$,} & \mbox{then $x_{\theta,i} =0$ for any $\theta \in \{ \theta_0,\theta_1\}$} \\
\mbox{If $u_i(\theta_0) \geq 0$ and $u_i(\theta_1) \geq 0$,} & \mbox{then $x_{\theta,i} = 1$ for any $\theta \in \{ \theta_0,\theta_1\}$.} \\
\mbox{If $u_i(\theta_0) \geq 0$ and $u_i(\theta_1) < 0$,} & \mbox{ then $x_{\theta_0,i} =1$ and $x_{\theta_1,i} = \min \{ -\frac{ \lambda(\theta_0) u_i(\theta_0)}{\lambda(\theta_1) u_i(\theta_1)}, 1 \}$.} \\	
\mbox{If $u_i(\theta_0) < 0$ and $u_i(\theta_1) \geq 0$,} & \mbox{ then $x_{\theta_1,i} =1$ and $x_{\theta_0,i} = \min \{ -\frac{ \lambda(\theta_1) u_i(\theta_1)}{\lambda(\theta_0) u_i(\theta_0)}, 1 \}$.}
\end{array}
\end{equation}
 
Observe that $\varphi_{I}$  is precisely the scheme that maximizes the probability of recommending action $1$ to each receiver $i$ independently. Notice that $\varphi_I$  is only determined by the receiver's payoffs, and does \emph{not} depend on the sender's payoff function $f_{\theta}(S)$. Nevertheless, the following theorem shows that $\varphi_I$ is approximately optimal.

\begin{theorem}\label{thm:binary}
	In the binary-state setting with monotone submodular sender objectives, the scheme $\varphi_I$ is a $(1-\frac{1}{e})$-approximate private signaling scheme.    
\end{theorem}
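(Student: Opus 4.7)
My plan is to combine three ingredients: (i) a marginal-dominance property of $\varphi_I$ that uses the binary-state structure, (ii) monotonicity of the multilinear extension, and (iii) the correlation gap bound (Theorem \ref{thm:corgap}). Let $\varphi^*$ denote the optimal direct and persuasive private scheme with marginals $x^*_{\theta,i} = \sum_{S \ni i} \varphi^*(\theta,S)$, and let $x^I_{\theta,i}$ denote the marginals defined in \eqref{eq:binaryScheme}. Write $F_\theta$ for the multilinear extension of $f_\theta$; since $\varphi_I$ signals to receivers independently, the sender's expected utility under $\varphi_I$ equals $\sum_{\theta} \lambda(\theta) F_\theta(x^I_\theta)$, while $OPT = \sum_\theta \lambda(\theta) \Ex_{S \sim \varphi^*_\theta} f_\theta(S)$.

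\textbf{Step 1 (marginal dominance).} I will show that $x^I_{\theta,i} \geq x^*_{\theta,i}$ for every $\theta \in \{\theta_0,\theta_1\}$ and every receiver $i$. This is trivial in the first two cases of \eqref{eq:binaryScheme}: if both $u_i(\theta_0),u_i(\theta_1)<0$ then persuasiveness $\lambda(\theta_0)x^*_{\theta_0,i}u_i(\theta_0)+\lambda(\theta_1)x^*_{\theta_1,i}u_i(\theta_1)\geq 0$ forces $x^*_{\theta_\cdot,i}=0$, and if both are nonnegative then $x^I_{\theta,i}=1$ upper bounds anything. In the mixed case, say $u_i(\theta_0)\geq 0 > u_i(\theta_1)$, obviously $x^I_{\theta_0,i}=1\geq x^*_{\theta_0,i}$; using $x^*_{\theta_0,i}\leq 1$ in the persuasiveness inequality and dividing by the negative $\lambda(\theta_1)u_i(\theta_1)$ yields $x^*_{\theta_1,i}\leq -\lambda(\theta_0)u_i(\theta_0)/(\lambda(\theta_1)u_i(\theta_1))$, which together with $x^*_{\theta_1,i}\leq 1$ gives exactly $x^*_{\theta_1,i}\leq x^I_{\theta_1,i}$. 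The symmetric case is analogous. This step is where binary-state is essential: with only two states, one can simultaneously maximize the marginal at both states without any cross-state tradeoff, which is precisely why $\varphi_I$ is well-defined and oblivious.

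\textbf{Step 2 (chaining via correlation gap).} For each fixed $\theta$, the distribution $\varphi^*_\theta$ lies in $D(x^*_\theta)$, so by the correlation gap bound (Theorem \ref{thm:corgap}) applied to the monotone submodular $f_\theta$,
\[
F_\theta(x^*_\theta) \;=\; \Ex_{S \sim p^I_{x^*_\theta}} f_\theta(S) \;\geq\; \Bigl(1-\tfrac{1}{e}\Bigr)\,\Ex_{S \sim \varphi^*_\theta} f_\theta(S).
\]
By Step 1, $x^I_\theta \succeq x^*_\theta$ coordinate-wise, and since $F_\theta$ is monotone (because $f_\theta$ is), $F_\theta(x^I_\theta) \geq F_\theta(x^*_\theta)$. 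Combining and averaging over $\theta$ with weights $\lambda(\theta)$ gives
\[
\sum_\theta \lambda(\theta)\,F_\theta(x^I_\theta) \;\geq\; \Bigl(1-\tfrac{1}{e}\Bigr)\sum_\theta \lambda(\theta)\,\Ex_{S\sim \varphi^*_\theta}f_\theta(S) \;=\; \Bigl(1-\tfrac{1}{e}\Bigr)\,OPT,
\]
which is exactly the claimed bound on the sender's expected utility under $\varphi_I$.

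\textbf{Obstacle and remarks.} The main conceptual step is Step 1, the marginal dominance; the rest is essentially a direct invocation of existing tools. What makes Step 1 work (and what prevents the theorem from extending verbatim to $|\Theta|>2$) is that the persuasiveness constraint is a single linear inequality in the two unknowns $x^*_{\theta_0,i}, x^*_{\theta_1,i}$ with coefficients of opposite sign in the mixed cases, so pushing one marginal to its maximum of $1$ maximally relaxes the bound on the other. I should also verify persuasiveness of $\varphi_I$ itself, but this is immediate from \eqref{eq:binaryScheme}: the definition makes $\sum_\theta \lambda(\theta) x^I_{\theta,i} u_i(\theta) \geq 0$ tight or slack in each case by construction, and since $\varphi_I$ signals independently across receivers, the per-receiver persuasiveness constraint suffices.
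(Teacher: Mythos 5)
Your proof is correct and follows essentially the same route as the paper: the paper's Lemma~\ref{lem:relax} is your Step~2 (plugging the marginals $x^*$ of the optimal private scheme into the independent-scheme objective and invoking the correlation gap), and the paper's Lemma~\ref{lem:OptSol2} is your Step~1 (the binary-state case analysis showing $\varphi_I$ pointwise-maximizes the marginals, combined with monotonicity of the multilinear extension). The only difference is organizational — you chain the two steps directly via marginal dominance rather than stating "optimal independent scheme" as an intermediate object.
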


To prove Theorem \ref{thm:binary}, we first exhibit an optimization program, i.e., Program \eqref{prog:relaxed}, that computes the optimal independent signaling scheme, i.e., the independent scheme that achieves the maximum sender utility among all independent signaling schemes.  Here the variables $x_{\theta,i}$ denote the probability of  recommending action $1$ to bidder $i$  given the state of nature $\theta$. Notice that $\sum_{S\subseteq [n]} f_{\theta}(S) \prod_{i \in S} x_{\theta,i} \prod_{i \not \in S} (1-x_{\theta,i})$ in the objective is precisely the multi-linear extension of $f_{\theta}$ at  $x_{\theta} = (x_{\theta,1},...,x_{\theta,n})^T$. Theorem \ref{thm:binary} follows from two lemmas: 
 1. The optimal independent signaling  scheme  serves as a $(1-\frac{1}{e})$-approximation to the optimal private signaling scheme (Lemma \ref{lem:relax}); 2. When there are two states of nature, the $\varphi_I$ defined above is an optimal independent signaling scheme (Lemma \ref{lem:OptSol2}).
\begin{figure}[H]
	\centering
		\begin{lp}\label{prog:relaxed}
			\maxi{\sum_{\theta \in \Theta} \lambda(\theta) \sum_{S\subseteq [n]} f_{\theta}(S) \prod_{i \in S} x_{\theta,i} \prod_{i \not \in S} (1-x_{\theta,i})}
			\st 
			\qcon{\sum_{\theta \in \Theta}\lambda(\theta) x_{\theta,i} u_i(\theta) \geq 0}{i = 1,...,n}
			\qcon{0 \leq x_{\theta,i} \leq 1}{i=1,...,n;\theta \in \Theta}
		\end{lp}
Optimal Independent Signaling Problem
\end{figure}

\begin{lemma}\label{lem:relax}
	In the multiple-state setting with monotone submodular sender objectives, the optimal independent signaling scheme is a $(1-\frac{1}{e})$-approximation to the optimal private signaling scheme.  
	
\end{lemma}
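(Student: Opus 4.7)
The plan is to start from an optimal private signaling scheme $\varphi^*$ for the sender, extract its marginals, and show that the independent scheme with those same marginals satisfies the persuasiveness constraints and loses at most a factor $(1-1/e)$ in the sender's objective. The main tool will be the correlation gap bound of Theorem~\ref{thm:corgap}.

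First, I would let $\varphi^*$ be an optimal solution to LP~\eqref{lp:opt0}, with value $OPT$, and let $x^*_{\theta,i} = \sum_{S : i \in S} \varphi^*(\theta, S)$ be the induced marginals. I would then define the independent scheme $\tilde{\varphi}$ by $\tilde{\varphi}(\theta, \cdot) = p^I_{x^*_\theta}$, i.e., given state $\theta$, independently include each receiver $i$ in the recommended set with probability $x^*_{\theta, i}$. The key observation is that $\tilde{\varphi}$ has exactly the same marginals $x^*_{\theta, i}$ as $\varphi^*$, and the persuasiveness constraints in LP~\eqref{lp:opt0} depend only on these marginals. Hence $\tilde{\varphi}$ is a feasible persuasive signaling scheme, and in particular a feasible solution to Program~\eqref{prog:relaxed}.

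Next, I would bound the sender's objective under $\tilde{\varphi}$. For every fixed $\theta \in \Theta$, the distribution $\varphi^*(\theta, \cdot)$ lies in $D(x^*_\theta)$, and so does $p^I_{x^*_\theta}$. Since $f_\theta$ is nonnegative, monotone, and submodular, Theorem~\ref{thm:corgap} (with $\kappa \le e/(e-1)$) implies
\begin{equation*}
\Ex_{S \sim p^I_{x^*_\theta}} f_\theta(S) \;\ge\; \left(1 - \tfrac{1}{e}\right) \Ex_{S \sim \varphi^*(\theta, \cdot)} f_\theta(S).
\end{equation*}
Multiplying by $\lambda(\theta)$ and summing over $\theta$, the left-hand side becomes the sender utility of the independent scheme $\tilde{\varphi}$ (which equals $\sum_\theta \lambda(\theta) F_\theta(x^*_\theta)$, the objective of Program~\eqref{prog:relaxed} at $x^*$), and the right-hand side becomes $(1-1/e)\,OPT$. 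Since the optimal independent scheme is at least as good as $\tilde{\varphi}$, the lemma follows.

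I do not anticipate any real obstacle: the proof reduces cleanly to a per-state application of the correlation gap, with the persuasiveness constraints preserved automatically because they are linear in the marginals. The only subtlety to double-check is that the correlation gap bound in~\cite{Agrawal2010} indeed applies to $f_\theta$ (nonnegative, monotone non-decreasing, submodular), which is guaranteed by our standing assumptions on the sender's utility. Note also that this argument does not use that $|\Theta| = 2$, so the lemma is stated for arbitrarily many states; the specialization to two states is only used later, in Lemma~\ref{lem:OptSol2}, to argue that the explicit $\varphi_I$ is an \emph{optimal} independent scheme.
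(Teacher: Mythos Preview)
Your proposal is correct and follows essentially the same approach as the paper: extract the marginals $x^*_{\theta,i}$ of an optimal private scheme, observe they are feasible for Program~\eqref{prog:relaxed} because persuasiveness depends only on marginals, and then apply the correlation gap (Theorem~\ref{thm:corgap}) state-by-state to bound the objective by $(1-1/e)\,OPT$. The paper's proof is structured identically, including the per-$\theta$ application of the correlation gap and the final observation that the optimal independent scheme dominates the particular one constructed from $x^*$.
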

\begin{proof}
	Let $\varphi^*$ be any optimal private signaling scheme and $OPT$ denote the optimal sender utility. 
	Therefore,  the principle's optimal utility can be expressed as follows.  
	\begin{equation*}
	OPT = \sum_{\theta \in \Theta} \lambda(\theta) \sum_{S \subseteq [n]} f_{\theta}(S) \varphi^*(\theta, S)  = \sum_{\theta \in \Theta} \lambda(\theta) \Ex_{S \sim \varphi^*(\theta) } f_{\theta}(S).
	\end{equation*}
	
	Let $x^*_{\theta,i} = \sum_{i \in S} \varphi^*(\theta,S)$ denote the marginal probability of recommending action $1$ to receiver $i$ by $\varphi^*$ given the state of nature $\theta$. Since $\varphi^*$ is a direct signaling scheme and satisfies the persuasiveness constraints, so $\{ x^*_{\theta,i} \}_{\theta \in \Theta, i\in[n]}$ is a feasible solution to Program \eqref{prog:relaxed}.  Let $Val( x^*)$ denote the corresponding objective value of $\{ x^*_{\theta,i} \}_{\theta \in \Theta, i\in[n]}$ in Program \eqref{prog:relaxed}. We shall prove that $Val(x^*) \geq (1-\frac{1}{e})OPT$.  
	
	Let $x_{\theta}$ denote the vector $ (x_{\theta,1},...,x_{\theta,n})^T$. Observe that the objective function of Program \eqref{prog:relaxed} is precisely $\sum_{\theta \in \Theta} \lambda(\theta) \Ex_{S \sim p_{x_{\theta}}^I} f_{\theta}(S) $ where $p_{x_{\theta}}^I$ is the independent distribution with marginal $x_{\theta}$. 
	Utilizing the correlation gap for monotone submodular functions, we have
	\begin{equation*}
	\frac{ \Ex_{S \sim \varphi^*_{\theta} } f_{\theta}(S)  }{ \Ex_{S \sim p_{x_{\theta}^*}^I  } f_{\theta}(S) } \leq \max_{p \in D_{x_{\theta}^*}}  \frac{ \Ex_{S \sim p } f_{\theta}(S)  }{ \Ex_{S \sim p_{x_{\theta}^*}^I  } f_{\theta}(S) } \leq \frac{e}{e-1}
	\end{equation*}
	That is, $\Ex_{S \sim p_{x_{\theta}^*}^I } f_{\theta}(S) \geq (1-\frac{1}{e}) \Ex_{S \sim  \varphi^*_{\theta}   } f_{\theta}(S)$ for every $\theta$. 
	Therefore, 
	\begin{equation*}
	\frac{Val(x^*)}{OPT} = \frac{ \sum_{\theta \in \Theta} \lambda(\theta) \cdot \Ex_{S \sim p_{x_{\theta}^*}^I } f_{\theta}(S)  }{ \sum_{\theta \in \Theta} \lambda(\theta) \cdot \Ex_{S \sim \varphi^*_{\theta} } f_{\theta}(S) } \geq 1-\frac{1}{e}.
	\end{equation*} 
	
	Note that the optimal objective value of Program \eqref{prog:relaxed} is at least $Val(x^*)$ thus is at least  $ (1-1/e)OPT$. Therefore if $\{ x_{\theta,i} \}_{\theta \in \Theta,i\in[n]}$ is the optimal solution to Program \eqref{prog:relaxed}, then the scheme, which independently recommends action $1$ to receiver $i$ with probability $x_{\theta,i}$ and action $0$ otherwise at state $\theta$,  achieves sender utility at least $(1-\frac{1}{e})OPT$, therefore is an $(1-\frac{1}{e})$-approximation to the optimal private signaling scheme.

	\end{proof}

\begin{lemma}\label{lem:OptSol2}
	In the  binary-state  setting with monotone sender objectives, the $\varphi_I$ defined above is an optimal independent signaling scheme. 
\end{lemma}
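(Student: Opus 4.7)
The plan is to show that $\varphi_I$ is the coordinate-wise maximum feasible solution to Program~\eqref{prog:relaxed}, which then immediately implies optimality because the objective of that program is monotone non-decreasing in every variable $x_{\theta,i}$.

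First I would exploit two structural features of Program~\eqref{prog:relaxed} specific to the binary-state setting. The persuasiveness constraint for each receiver $i$, namely $\lambda(\theta_0) x_{\theta_0,i} u_i(\theta_0) + \lambda(\theta_1) x_{\theta_1,i} u_i(\theta_1) \geq 0$, couples only the two variables $x_{\theta_0,i}$ and $x_{\theta_1,i}$, so the feasible set factors as a product over receivers of two-dimensional polytopes in $[0,1]^2$. Moreover, the objective $\sum_{\theta} \lambda(\theta) F_\theta(x_\theta)$ is a non-negative linear combination of multilinear extensions of the monotone set functions $f_\theta$, and is therefore monotone non-decreasing in every coordinate $x_{\theta,i}$ (recall from the preliminaries that the multilinear extension of a monotone set function is monotone).

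Fixing a receiver $i$, I would then prove that the pair $(x_{\theta_0,i}, x_{\theta_1,i})$ prescribed by $\varphi_I$ in~\eqref{eq:binaryScheme} dominates coordinate-wise every feasible pair $(x'_{\theta_0,i}, x'_{\theta_1,i}) \in [0,1]^2$. This reduces to a short case analysis on the signs of $u_i(\theta_0)$ and $u_i(\theta_1)$. If both are non-negative, the persuasiveness constraint is vacuous and $\varphi_I$ picks $(1,1)$, which trivially dominates. If both are negative, the persuasiveness constraint forces $x'_{\theta_0,i} = x'_{\theta_1,i} = 0$ since each of its two summands is non-positive and their sum must be non-negative, again matching $\varphi_I$. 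In the mixed case $u_i(\theta_0) \geq 0 > u_i(\theta_1)$, $\varphi_I$ sets $x_{\theta_0,i} = 1$, so $x'_{\theta_0,i} \leq 1 = x_{\theta_0,i}$ is immediate; substituting $x'_{\theta_0,i} \leq 1$ into the persuasiveness constraint yields $x'_{\theta_1,i} \leq -\lambda(\theta_0) u_i(\theta_0)/(\lambda(\theta_1) u_i(\theta_1))$, which combined with $x'_{\theta_1,i} \leq 1$ gives $x'_{\theta_1,i} \leq x_{\theta_1,i}$. The symmetric subcase is handled identically.

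Combining the two observations, the solution $\varphi_I$ Pareto-dominates every other feasible solution coordinate-wise, so by monotonicity of the objective it attains the maximum of Program~\eqref{prog:relaxed}. Note that the argument uses only monotonicity of each $f_\theta$, not submodularity, consistent with the statement of the lemma. I do not anticipate any real obstacle; the one conceptual point worth underlining is that this degeneracy --- the Pareto frontier of each receiver's two-dimensional feasible region collapsing to a single point readable off directly from $\lambda$ and the $u_i(\theta)$ values --- is precisely the feature peculiar to two states, and explains both why the optimal independent scheme is oblivious here and why one should not expect this structure to persist for $|\Theta| > 2$.
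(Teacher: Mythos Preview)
Your proposal is correct and takes essentially the same approach as the paper: both argue that the objective of Program~\eqref{prog:relaxed} is monotone non-decreasing in each $x_{\theta,i}$, then perform the same four-way case analysis on the signs of $u_i(\theta_0),u_i(\theta_1)$ to show that the $\varphi_I$ values coordinate-wise dominate every feasible point. Your framing via Pareto domination is slightly more explicit than the paper's, but the underlying argument is identical.
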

	\begin{proof}
 First observe that the objective function of Program \eqref{prog:relaxed} is non-decreasing in $x_{\theta,i}$ due to the monotonicity of $f_{\theta}(S)$ and the non-negativity of $\lambda(\theta)$. The optimality of $\varphi_I$ simply follows from the monotonicity.
In particular,  when $u_i(\theta_0) < 0$ and $u_i(\theta_1) < 0$, any feasible solution to Program \eqref{prog:relaxed} must satisfy $x_{\theta,i} =0$ for any $\theta$, i.e., receiver $i$ can never be persuaded to take action $1$. On the other hand, when $u_i(\theta_0) \geq 0$ and $u_i(\theta_1) \geq 0$, the sender can always recommend action $1$ to receiver $i$, so we have $x_{\theta,i} = 1$ for any $\theta$. Now consider the case $u_i(\theta_0) \geq 0$ and $u_i(\theta_1) < 0$. If $x_{\theta_0,i} < 1$ at optimality, then resetting its value to $1$ neither decreases the objective value due to monotonicity nor violates the persuasiveness constraint since $u_i(\theta_0) \geq 0$. Moreover, given that $x_{\theta_0,i} = 1$, the maximum possible value of $x_{\theta_1,i}$ that still maintains feasibility is  precisely $\min \{ -\frac{ \lambda(\theta_0) u_i(\theta_0)}{\lambda(\theta_1) u_i(\theta_1)}, 1 \}$, as we defined in \eqref{eq:binaryScheme}. The case for $u_i(\theta_0) < 0$ and $u_i(\theta_1) \geq 0$ is similarly derived. Thus $\varphi_I$ is optimal to Program \eqref{prog:relaxed} .
\end{proof}

We conclude this section with two negative results regarding generalizing Theorem \ref{thm:binary} to many states of natures. We first show that it is NP-hard to  approximate the optimal independent signaling scheme within a factor better than $(1-\frac{1}{e})$ 
when there are multiple states of nature (Proposition \ref{prop:multiHard}). We then prove that the best oblivious scheme can perform poorly in terms of the sender utility when there are multiple states  (Proposition \ref{prop:multiObliv}).

\begin{proposition}\label{prop:multiHard} 
	In the multiple-state setting with monotone submodular sender objectives, it is NP-hard to compute an independent signaling scheme that $(1-1/e)$-approximates the optimal independent scheme.
\end{proposition}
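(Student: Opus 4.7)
My plan is to reduce from submodular welfare maximization with coverage valuations, which is NP-hard to approximate within a factor better than $1-\frac{1}{e}$ by the result of Khot, Lipton, Markakis, Mehta, and Vishnoi. Given a welfare instance with $k$ bidders over $N$ items and coverage valuations $v_1,\dots,v_k:2^{[N]}\to[0,N]$, I would construct a signaling instance with $n=N$ receivers, states $\Theta=\{\theta_0,\theta_1,\dots,\theta_k\}$ under the uniform prior, sender utilities $f_{\theta_0}\equiv 0$ and $f_{\theta_\ell}=v_\ell/N$ for $\ell\ge 1$, and receiver utilities $u_i(\theta_0)=+1$ and $u_i(\theta_\ell)=-1$ for every $i\in[n]$ and $\ell\ge 1$. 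Each $f_\theta$ is then monotone submodular with range in $[0,1]$, as the statement requires.

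Under this construction, the persuasiveness constraint of Program~\eqref{prog:relaxed} at receiver $i$ reduces to $x_{\theta_0,i}\ge\sum_{\ell\ge 1}x_{\theta_\ell,i}$; since $f_{\theta_0}\equiv 0$, setting $x_{\theta_0,i}=1$ is without loss, and the constraint collapses to $\sum_{\ell\ge 1}x_{\theta_\ell,i}\le 1$, namely the partition-matroid assignment constraint of the welfare problem. Up to the constant factor $\frac{1}{(k+1)N}$, the objective of Program~\eqref{prog:relaxed} then equals $\sum_{\ell=1}^k V_\ell(x_{\theta_\ell})$, where $V_\ell$ is the multilinear extension of $v_\ell$. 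Thus the optimal independent scheme value realizes exactly the multilinear optimum of the welfare instance over its partition-matroid polytope, up to this shared constant.

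The final step I would invoke is value-preserving pipage rounding for monotone submodular objectives over a matroid polytope: because the multilinear extension is convex along each matroid swap direction $e_i-e_j$ (a consequence of the non-positivity of the discrete second difference of a submodular function), one can iteratively move any fractional solution to the boundary without decreasing the objective, producing in polynomial time an integer welfare solution of value at least the fractional one. Consequently, a polynomial-time $(1-\frac{1}{e}+\epsilon)$-approximation to the best independent scheme would, via the reduction followed by rounding, yield a polynomial-time $(1-\frac{1}{e}+\epsilon)$-approximation to the integer welfare optimum, contradicting the Khot et al.\ hardness. The step I expect to be the most delicate is the rounding: I must verify that $\sum_\ell V_\ell(x_{\theta_\ell})$ is genuinely monotone submodular on the full product space $[0,1]^{kN}$ and that the persuasiveness-induced polytope is exactly a partition matroid, so that no integrality gap separates the fractional and integer welfare optima.
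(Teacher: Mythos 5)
Your proposal is correct and follows essentially the same route as the paper: a reduction from the Khot et al.\ hardness of submodular welfare maximization, with a special state $\theta_0$ (positive receiver utility, zero sender utility) and one state per bidder (receiver utility $-1$), so that the persuasiveness constraints of Program~\eqref{prog:relaxed} collapse to a partition-matroid polytope and the objective becomes the sum of multilinear extensions, with pipage rounding closing the gap between fractional and integral welfare. The only cosmetic differences are your explicit normalization $f_{\theta_\ell}=v_\ell/N$ and your restriction to coverage-type valuations, both of which are harmless (indeed, the latter is where the Khot et al.\ hardness is actually established).
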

\begin{proof}
	We reduce the Submodular Welfare Problem (SWP) to solving Program \eqref{prog:relaxed}. The SWP is a classic problem of combinatorial auctions, described as follows: given $n$ \emph{items} and $m$ \emph{bidders} with monotone submodular utility functions $f_i : 2^{[n]} \to \RR^+$, we seek a partition of the items into disjoint sets $S_1,...,S_m$, each for a bidder, in order to maximize the total welfare $\sum_{i=1}^m f_i(S_i)$. This problem is first studied by  \citet{Khot2005}. They prove that it is NP-hard to approximate the optimal welfare within a factor better than  $(1-\frac{1}{e})$. To prove the proposition, we provide a reduction from SWP to  solving Program \eqref{prog:relaxed}.
	
	Consider an SWP instance with $n$ items and $m$ bidders. Bidder $j$'s utility function  $f_j : 2^{[n]} \to \RR^+$  is monotone submodular. We now construct a private persuasion instance. For the reduction, it helps to think of  items as receivers and bidders as states of nature. In particular, consider a persuasion instance with $n$ receivers and $m+1$ equally possible states of nature $\theta_0,...,\theta_m$, where $\theta_j$ corresponds to bidder $j$ for $j >0$. For each $\theta_j$ with $j > 0$, define $u_i(\theta_j) = -1$ for all receiver $i$ and the sender's utility function conditioned on state $\theta_j$ as $f_{j}$. State $\theta_0$ is a special state with $u_i(\theta_0) = 1$ for all receive $i$ and $f_{\theta_0}(S) = 0$ for all set $S$. We claim that the optimal objective value of Program \eqref{prog:relaxed} for our constructed private persuasion instance equals, up to a scale factor $1/(m+1)$, the optimal welfare of the given SWP instance. 
	
	By monotonicity, w.l.o.g., we can assume $x_{\theta_0,i} = 1$ for all $i$ at optimality since $u_i(\theta_0) >0$. After  pre-setting the values of these variables, Program \eqref{prog:relaxed} can be re-written as follows. 
	\begin{lp*}
		\maxi{\frac{1}{m+1}\sum_{j=1}^m  F_{j}(x_{\theta_j}) }
		\st 
		\qcon{\sum_{j=1}^n x_{\theta_j,i}  \leq  1}{i = 1,...,n}
		\qcon{0 \leq x_{\theta_j,i} \leq 1}{i=1,...,n; \, j = 1,...,m}
	\end{lp*}
	
	By viewing $x_{\theta_j,i}$ as the probability of allocating item $i$ to player $j$, the above optimization program corresponds precisely to the given SWP instance above, up to a scaling of the objective, except that the allocation is now allowed to be randomized. In particular, the constraints form a partition matroid polytope. Since any fractional allocation can be efficiently rounded to a deterministic allocation achieving the same welfare (e.g., use Pipage rounding \cite{calinescu2011}), an $\alpha$-approximate solution to Program \eqref{prog:relaxed} yields an $\alpha$-approximate deterministic allocation for the SWP instance. Since it is NP-hard to approximate SWP within a ratio better than $(1-\frac{1}{e})$, the same approximation hardness holds for solving Program \eqref{prog:relaxed}, completing the proof. 
	\end{proof}


\begin{proposition}\label{prop:multiObliv}
	For any integer $m>1$, there exists an instance with $m$ states of nature such that any oblivious private scheme can achieve at most $\frac{1}{\floor{\sqrt{m-1}}}$ fraction of the optimal sender utility.
\end{proposition}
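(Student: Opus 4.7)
I would take the instance with $k = \floor{\sqrt{m-1}}$ receivers, $m$ states $\theta_0, \theta_1, \ldots, \theta_{m-1}$ under the uniform prior $\lambda(\theta) = 1/m$, and symmetric receiver utilities $u_i(\theta_0) = +1$ and $u_i(\theta_j) = -1$ for every $i \in [k]$ and every $j \geq 1$. Under these utilities the persuasiveness constraint for receiver $i$ collapses to $\sum_{j=1}^{m-1} x_{\theta_j, i} \leq x_{\theta_0, i} \leq 1$, so any persuasive scheme (in particular any oblivious $\varphi$) produces marginals $\bar x_{\theta, i}$ whose total across the $m-1$ ``costly'' states is at most $1$ for each receiver. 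Critically, since an oblivious scheme depends only on receiver utilities and the prior, these marginals $\bar x_{\theta, i}$ are fixed before the sender's objective is revealed.

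\textbf{Adversarial sender objective.} Given the fixed $\bar x_{\theta, i}$, I would choose the sender's objective to exploit them. Pick an injection $\pi : [k] \to \{1, \ldots, m-1\}$ (which exists since $k \leq m-1$) and set $f_{\theta_{\pi(i)}}(S) = \mathbf{1}[i \in S]$ for each $i \in [k]$, with $f_{\theta_j} \equiv 0$ for every other state $\theta_j$ (including $\theta_0$). Each of these set functions is linear, hence monotone submodular. To find a $\pi$ that makes $\varphi$ look bad, draw $\pi$ uniformly at random from all injections $[k] \to [m-1]$. Each $\pi(i)$ is marginally uniform on $[m-1]$, so
\[
\Ex_\pi[\bar x_{\theta_{\pi(i)}, i}] = \frac{1}{m-1}\sum_{j=1}^{m-1} \bar x_{\theta_j, i} \leq \frac{1}{m-1}
\]
by persuasion. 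Summing over $i$ and applying the probabilistic method yields a specific $\pi$ with $\sum_i \bar x_{\theta_{\pi(i)}, i} \leq k/(m-1)$.

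\textbf{Ratio and obstacle.} For this $f$, the oblivious scheme's expected utility is $\frac{1}{m}\sum_i \bar x_{\theta_{\pi(i)}, i} \leq \frac{k}{m(m-1)}$. The sender's optimum, on the other hand, is attained by the persuasive scheme that in state $\theta_{\pi(i)}$ deterministically recommends action $1$ only to receiver $i$ and recommends action $1$ to everyone in $\theta_0$: each receiver burns exactly one unit of its budget in its single dedicated state, so $\sum_j x_{\theta_j, i}^{\mathrm{opt}} = 1 = x_{\theta_0, i}^{\mathrm{opt}}$, and the sender's utility is $k/m$. Thus the ratio is at most $\frac{1}{m-1} \leq \frac{1}{\floor{\sqrt{m-1}}}$ since $\floor{\sqrt{m-1}} \leq m-1$ for $m \geq 2$, as claimed. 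The only real content is the probabilistic matching argument that produces $\pi$; the rest is bookkeeping around the conceptual insight that the pooling state $\theta_0$ reduces persuasion to a per-receiver budget that the sender's optimum can concentrate with foreknowledge of $f$, whereas the oblivious scheme is forced to allocate it blindly across the $m-1$ costly states.
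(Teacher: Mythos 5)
Your proof is correct, and it follows the same skeleton as the paper's: one ``free'' state $\theta_0$ with $u_i(\theta_0)>0$ and all remaining states costly with $u_i=-1$, so that persuasiveness gives each receiver a unit budget $\sum_{j\ge 1}\bar x_{\theta_j,i}\le 1$ to spread over the costly states; then an averaging argument (the paper phrases it as Yao's principle over a uniform distribution on ``assignments,'' you phrase it as the probabilistic method over a uniform injection) pins down an adversarial objective on which the fixed oblivious marginals do poorly. The difference is the adversarial objective family, and it matters quantitatively. The paper assigns the anonymous function $f^1(S)=\mathbf{1}[S\neq\emptyset]$ to a random size-$n$ subset of $n^2$ costly states; because $f^1$ rewards recommending action $1$ to \emph{any} receiver, an oblivious scheme can hedge by fully covering $n$ of the $n^2$ costly states, a random $n$-subset hits about one covered state, and this is exactly why the paper needs quadratically many costly states and only obtains the ratio $1/n=1/\sqrt{m-1}$. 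Your per-receiver singleton objectives $f_{\theta_{\pi(i)}}=\mathbf{1}[i\in\cdot]$ (legitimately monotone submodular) eliminate hedging entirely --- each costly state pays only for one designated receiver --- so the adversary can match each receiver to a state where the scheme allocated it at most its average budget $1/(m-1)$, while the informed optimum concentrates the whole budget there. This yields the ratio $1/(m-1)$, which is strictly stronger than the claimed $1/\lfloor\sqrt{m-1}\rfloor$ and of course implies it. Two small points worth making explicit if you write this up: (i) the budget inequality $\sum_{j\ge1}\bar x_{\theta_j,i}\le 1$ holds for the action-$1$ probabilities induced by \emph{any} (not necessarily direct) scheme, by summing the best-response conditions over the signals on which receiver $i$ plays $1$, so restricting to marginals is without loss; and (ii) the choice $k=\lfloor\sqrt{m-1}\rfloor$ is immaterial to your argument --- any $1\le k\le m-1$ works --- so you may as well state the stronger $1/(m-1)$ bound.
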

\begin{proof}
		At a high level, we will construct a class of persuasion problems with the same receiver payoffs but different sender utility functions. Therefore any oblivious private signaling scheme should be the same across all the constructed instances. However we prove, via Yao's principle, that it will perform poorly in at least one of these instances. 
		
		Consider the  persuasion problem with $n$ receivers and $m+1 = n^2 + 1$ states of nature, denoted as $\theta_0, \theta_1,...,\theta_{m}$, each occurring with equal probability. The sender's utility function for each state is one of the following two simple functions: $f^0 \equiv 0$ and $f^1$ satisfying $f^1(S)=0$ when $S = \emptyset$ and $f^1(S) = 1$ otherwise. We reserve the flexibility of choosing the sender's objective function from one of these two functions at each state. The state $\theta_0$ is a special state with $u_i(\theta_0) = 1$ for every $i$ and sender utility function $f_{\theta_0} = f^0$ always. Let  $u_i(\theta_j) = -1$ for every $j >0$  and every $i \in [n]$, therefore the receivers' payoffs are fixed. We now define the possible different ways of defining the sender's objective function at different states. In particular,  let \emph{assignment} $A: \{\theta_1,...,\theta_m \} \to \{f_0,f_1\}$ denote a deterministic mapping that assigns an objective function to each state $\theta_j$ for $j > 0$ and satisfies that  exactly $n$ states are mapped to function $f_1$. We use $\A$ to denote all such assignments. Since each $A$ also uniquely determines a concrete persuasion instance, we will also think of $\A$ as the set of all the  constructed instances.  
		
		For any instance $A \in \A$ and a private signaling scheme $\varphi$ for $A$, we use $u(\varphi,A)$ to denote the sender's expected utility by using  scheme $\varphi$ for instance $A$. Observe that, for any instance $A$, the optimal private signaling scheme is to recommend action $1$ to every receiver at state $\theta_0$, recommend action $1$ to exactly one  receiver, distinctly across different states, at each state with sender objective function $f^1$, and recommend action $0$ otherwise. The optimal sender utility is $\frac{n}{m+1}$. On the other hand, since all the instances in $\A$ have the same receiver payoffs, these instances will use  the same oblivious private signaling scheme. Let $\P$ denote the set of all private signaling schemes $\varphi = \{\varphi(\theta,S) \}_{\theta,S}$ for any instance in $\A$. Note that $\P$ only depends on receivers' payoffs, thus is the same for any instance in $\A$. 
		For any $\varphi \in \P$, the approximation ratio of $\varphi$ as an oblivious scheme is $\min_{A \in \A} \frac{u(\varphi,A)}{n/(m+1)}$.    By Yao's principle, we have $$ \min_{A \in \A} u(\varphi,A) \leq \max_{\varphi \in \P} \mathbf{E}_{A \sim \D} [u(\varphi,A)], \qquad  \forall  \varphi \in \P$$ where $\D$ is an arbitrary distribution over set $\A$.  
		Now, let $\D$ be the uniform distribution over $\A$, then the best oblivious signaling scheme is simply to persuade $n$ receivers separately at arbitrarily chosen $n$ states since states are symmetric and indistinguishable under distribution $\D$. Some calculation shows that the expected utility is $\frac{1}{m+1}=\max_{\varphi \in \P} \mathbf{E}_{A \sim \D} [u(\varphi,A)]$ for uniform distribution $\D$.   Therefore, the best approximation ratio of any oblivious scheme is $$\max_{\varphi \in \P} \min_{A \in \A} \frac{u(\varphi,A)}{n/(m+1)} \leq \frac{m+1}{n} \cdot \frac{1}{m+1} = \frac{1}{n}.$$
\end{proof}

\section{Inefficacy and Hardness of Public Persuasion}
\label{sec:public}
In this section, we turn our attention to the design of  optimal public signaling schemes, and show a stark contrast with private signaling, both in terms of their efficacy at optimizing the sender's utility, and in terms of their computational complexity. 

We start with an example illustrating how the restriction to public signaling can drastically reduce the sender's expected utility. The example is notably simple: two states of nature, and a binary sender utility function which is independent of the state of nature. We show a multiplicative gap of $\Omega(n)$, and an additive gap of $1-\frac{1}{\Omega(n)}$,  between the expected sender utility from the optimal private and public signaling schemes,  where $n$ is the number of receivers.

\begin{example}[Inefficacy of Public Signaling Schemes]\label{ex:inefficacy}
	Consider an instance with $n$ identical receivers and two states of nature $\Theta = \{ \mathbf{H},\mathbf{L} \}$. Each receiver has the same utility function, defined as follows: $u_i(\mathbf{H}) = 1$ and $u_i(\mathbf{L}) = -1$, for all $i$. The state of nature $\mathbf{H}$ occurs with probability $\frac{1}{n+1}$, and $\mathbf{L}$ occurs with probability $\frac{n}{n+1}$.  The sender's utility function is $f_{\theta}(S) = f(S) = 
	\min(|S|,1)$. In other words, the sender gets utility $1$ precisely when at least one receiver takes action $1$.
	
	The persuasiveness constraints imply that each receiver can take action $1$ with probability no more than $\frac{2}{n+1}$. This is achievable by always recommending action $1$ to the receiver in state $\mathbf{H}$, and recommending action $0$ with probability $\frac{1}{n}$ in state $\mathbf{L}$. The sender's expected utility depends on how these recommendations are correlated.

	The optimal private scheme anti-correlates the receivers' recommendations in order to guarantee that at least one receiver takes action $1$ always, which achieves an expected sender utility of $1$, the maximum possible. Specifically, in state $\mathbf{H}$ the scheme always recommends action $1$ to every receiver, and in state $\mathbf{L}$ the scheme chooses one receiver uniformly at random and recommends action $1$ to that receiver, and action $0$ to the other receivers.
	
	We argue that no public scheme can achieve sender utility more than $\frac{2}{n+1}$. Indeed, since receivers are identical, our solution concept implies that they choose the same action for every realization of a public signal. Therefore, the best that a  public scheme can do is to recommend action $1$ to all receivers simultaneously with probability $\frac{2}{n+1}$ in aggregate, and recommend action $0$ with the remaining probability, yielding an expected sender utility of $\frac{2}{n+1}$.  This is achievable:  in state $\mathbf{H}$ the scheme always recommends action $1$ to every receiver, and in state $\mathbf{L}$ the scheme recommends action  $1$ to all receivers with probability $\frac{1}{n}$, and action $0$ to all receivers with probability $1-\frac{1}{n}$.

\end{example}

Our next result illustrates the computational barrier to obtaining the optimal public signaling scheme, even for additive sender utility functions. Our proof is inspired by a reduction in \cite{mixture_selection} for proving the hardness of computing the best posterior distribution over $\Theta$, a problem termed \emph{mixture selection} in \cite{mixture_selection}, in a voting setting. That reduction is from the maximum independent set problem. Since a public signaling scheme is a combination of posterior distributions, one for each signal, we require a more involved reduction from a graph-coloring problem to prove our result.

\begin{theorem}\label{thm:PublicHard}
	Consider public signaling in our model, with sender utility function $f_{\theta}(S) = f(S) =\frac{|S|}{n}$. It is NP-hard to  approximate the optimal  sender utility to within any constant multiplicative factor. Moreover, there is no additive PTAS for evaluating the optimal  sender utility, unless P = NP.
\end{theorem}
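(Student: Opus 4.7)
The plan is to reduce from a promise version of graph coloring. Given a graph $G=(V,E)$ with $|V|=n$, I would construct a public persuasion instance with $n$ receivers indexed by $V$, states of nature $\Theta = V$ with uniform prior $\lambda(v)=1/n$, and receiver utilities
\[
u_v(v) = n^2, \qquad u_v(w) = -n^{10} \text{ for } w \in N(v), \qquad u_v(w) = -1 \text{ for } w \notin N[v].
\]
The sender utility $f(S)=|S|/n$ is the one specified in the theorem.

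First, I would establish the key structural property: for any posterior $q$ over $\Theta$, the set of optimally-activated receivers $S(q) = \{v : \sum_\theta q(\theta)u_v(\theta) \geq 0\}$ is always an independent set of $G$. Rewriting the activation condition as $(n^2+1)q(v) - (n^{10}-1)q(N(v)) \geq 1$, it follows that $v\in S(q)$ forces $q(v)\geq 1/(n^2+1)$. For adjacent $v_1,v_2\in S(q)$, we have $q(N(v_1))\geq q(v_2)$ and $q(N(v_2))\geq q(v_1)$; adding the two inequalities produces $[(n^2+1)-(n^{10}-1)](q(v_1)+q(v_2))\geq 2$, whose LHS is negative for $n\geq 2$ — a contradiction.

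Next, I would verify the two directions of the reduction. \emph{Completeness:} if $\chi(G)\leq k$, fix a proper $k$-coloring with classes $I_1,\ldots,I_k$ and use the public scheme that sends signal $j$ whenever $\theta\in I_j$. Each posterior $q_j$ is uniform on $I_j$ and occurs with probability $|I_j|/n$; a direct check (using $q_j(N(v))=0$ for $v\in I_j$ and $|I_j|\leq n\leq n^2+1$) gives $S(q_j)=I_j$, so the sender utility is $\tfrac{1}{n^2}\sum_j|I_j|^2 \geq \tfrac{1}{k}$ by Cauchy--Schwarz. \emph{Soundness:} since every $S(q)$ is an independent set, $|S(q)|\leq \alpha(G)$; hence any public signaling scheme achieves sender utility at most $\alpha(G)/n$, which is at most $1/k'$ whenever $\alpha(G)\leq n/k'$.

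Finally, I would invoke an appropriate graph-theoretic hardness result to produce the desired gap. For any constant $C$, invoking hardness of approximating chromatic number / max independent set (e.g., Feige--Kilian) gives NP-hardness of distinguishing $\chi(G)\leq k$ from $\alpha(G)\leq n/(Ck)$, which under our reduction becomes NP-hardness of distinguishing sender utility $\geq 1/k$ from $\leq 1/(Ck)$, ruling out a $C$-multiplicative approximation for every constant $C$. Fixing any constants $k<k'$ in such a promise problem converts to the constant additive gap $\tfrac{1}{k}-\tfrac{1}{k'}$ between YES and NO instances, which rules out an additive PTAS. The main obstacle is identifying a graph coloring inapproximability statement in exactly this combined $\chi$-vs-$\alpha$ form; if a direct reference is not at hand, one can amplify constant-factor coloring gaps to the needed level via standard graph-product constructions that scale $\chi$ and $\alpha$ correspondingly.
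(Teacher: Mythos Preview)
Your reduction is essentially the paper's: both set $\Theta=V$ with the uniform prior, choose receiver utilities so that for every posterior the set of activated receivers is an independent set of $G$, exhibit a coloring-based public scheme in the YES case, and upper-bound the NO case by $\alpha(G)/n$. The paper uses $u_i(i)=\tfrac12$, $u_i(\theta)=-1$ for edges, and $u_i(\theta)=-\tfrac{1}{4n}$ otherwise, importing the independent-set claim from prior work; your larger magnitudes serve the same purpose and your direct verification of that claim is correct.

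The one genuine gap is exactly the one you flag: the hardness source. You need NP-hardness of the promise ``$\chi(G)\le k$ versus $\alpha(G)\le n/k'$'' with $k'/k$ an arbitrary constant, and your proposed fallback---amplifying a constant $\chi$-gap via graph products---does not work. A NO promise of the form $\chi(G)\ge k'$ gives \emph{no} upper bound on $\alpha(G)$ (the inequality $\chi(G)\ge n/\alpha(G)$ points the wrong way; e.g.\ $K_{k'}$ plus many isolated vertices has $\chi=k'$ and $\alpha$ nearly $n$), so no product construction converts a pure $\chi$-gap into the $\alpha$-bound your soundness step requires. The paper closes this gap by citing a result of Khot: for any $k$, any $q\ge 2^k+1$, and any $\epsilon>0$, it is NP-hard to distinguish (i) $G$ contains a $q$-colorable induced subgraph on a $(1-\epsilon)$-fraction of the vertices, with each color class of size $\tfrac{(1-\epsilon)n}{q}$, from (ii) every independent set in $G$ has fewer than $n/q^{k+1}$ vertices. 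This yields YES-utility at least $(1-\epsilon)^2/q$ against NO-utility at most $1/q^{k+1}$, hence any constant multiplicative gap (take $k$ large) and a fixed additive gap exceeding $1/9$ (take $k=1$, $q=3$). With that citation in hand your argument is complete; the balanced color classes also render your Cauchy--Schwarz step unnecessary.
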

\begin{proof}
	We prove by reducing from the following NP-hard problem. In particular, \cite{Khot2012} prove that for any positive integer $k$, any integer $q$ such that $q \geq  2^k+ 1$, and an arbitrarily small constant $\eps > 0$,  given an undirected graph $G$, it is NP-hard to distinguish between the following two cases:
	\begin{itemize}
		\item {\bf Case 1}: There is a $q$-colorable induced subgraph of $G$ containing a  $(1-\eps)$ fraction of all vertices, where each color class contains a $ \frac{1-\eps}{q}$ fraction of all vertices.
		\item{\bf Case 2}: Every independent set in $G$ contains less than a $\frac{1}{q^{k+1}}$ fraction of all vertices.
	\end{itemize} 
	Given graph $G$ with vertices $[n] = \set{1,\ldots,n}$ and edges $E$, we will construct a public persuasion instance so that the desired algorithm for approximating the optimal sender utility can be used to distinguish these two cases. Our construction is similar to that  in \cite{mixture_selection}. We let there be $n$ receivers, and let $\Theta= [n]$. In other words, both receivers and states of nature correspond to vertices of the graph. We fix the uniform prior distribution over states of nature --- i.e., the realized state of nature is a uniformly-drawn vertex in the graph. We define the receiver utilities as follows: $u_{i}(\theta) = \frac{1}{2}$ if $i = \theta$;  $u_{i}(\theta) =-1$ if $(i,\theta) \in E$;  and $u_{i}(\theta) = -\frac{1}{4n}$ otherwise. We define the sender's utility function, with range $[0,1]$, to be $f_{\theta}(S) = f(S) =\frac{|S|}{n}$. The following claim is proven in \cite{mixture_selection}. 
	\begin{claim}[\cite{mixture_selection}]\label{claim:SignalIndep}
		For any distribution $x \in \Delta_{\Theta}$, the set $S = \{ i \in [n]: \sum_{\theta \in \Theta} x_{\theta} u_i(\theta) \geq 0 \}$ is an independent set of $G$. 
	\end{claim}


	Claim \ref{claim:SignalIndep} implies that upon receiving any public signal with any posterior distribution $x$ over $\Theta$, the players who take action $1$ always form an independent set of $G$. Therefore, if the graph $G$ is from {\bf Case 2}, the sender's expected utility in any public signaling scheme is at most $\frac{1}{q^{k+1}}$.
	

	Now supposing that $G$ is from {\bf Case 1}, we fix the corresponding coloring of $(1-\epsilon)n$ vertices with colors $k=1,\ldots,q$, and we use this coloring to construct a public scheme achieving expected sender utility at least $\frac{(1-\epsilon)^2}{q}$. The scheme uses $q+1$ signals, and is as follows: if $\theta$ has color $k$ then deterministically send the signal $k$, and if $\theta$ is uncolored then deterministically send the signal $0$. Given signal $k > 0$, the posterior distribution on states of nature is the uniform distribution over the vertices with color $k$ --- an independent set $S_k$ of size $\frac{1-\epsilon}{q} n$. It is easy to verify that receivers $ i \in S_k$ prefer action $1$ to action $0$, since $\sum_{\theta \in S_k} \frac{1}{|S_k|} u_i(\theta) = \frac{1}{|S_k|} ( \frac{1}{2} - \frac{|S_k| - 1}{4n}) > \frac{1}{4 |S_k|} \geq 0$. Therefore, the sender's utility is  $f(S_k) = \frac{|S_k|}{n} = \frac{1-\epsilon}{q}$ whenever $k>0$. Since signal $0$ has probability  $\epsilon$, we conclude that the sender's expected utility is at least $\frac{(1-\epsilon)^2}{q}$, as needed.

	
	Since distinguishing {\bf Case 1}  and {\bf Case 2} is NP-hard for arbitrarily large constants  $k$ and $q$, we conclude that it is NP-hard to approximate the optimal sender utility to within any constant factor. Moreover, by setting $k=1,q=3$, we conclude that the sender's utility cannot be approximated additively to within $(1-\eps)^2/3 - 1/3^2 > 1/9$, and thus there is no additive PTAS, unless P=NP. 
	
\end{proof}

\bibliographystyle{named}
\bibliography{refer}

\end{document}